\newcommand{\RR}{\mathbb{R}}
\newcommand{\NN}{\mathbb{N}}
\newcommand{\PP}{\mathbb{P}}
\newcommand{\EE}{\mathbb{E}}
\DeclareMathOperator*{\argmax}{arg\,max}
\DeclareMathOperator{\vol}{V}
\DeclareMathOperator{\SO}{SO}
\DeclareMathOperator{\interior}{int}
\DeclareMathOperator{\relint}{relint}
\theoremstyle{plain}
\newtheorem{theorem}{Theorem}
\newtheorem{lemma}{Lemma}
\newtheorem{corollary}{Corollary}
\theoremstyle{remark}
\newtheorem{remark}{Remark}
\theoremstyle{definition}
\newtheorem{definition}{Definition} 
\begin{document}

\title{Existence and approximation of densities of chord length- and cross section area distributions}

\author[1]{Thomas van der Jagt}
\author[1]{Geurt Jongbloed}
\author[2]{Martina Vittorietti}
\affil[1]{Delft Institute of Applied Mathematics, Delft University of Technology.}
\affil[2]{Scienze Economiche, Aziendali e Statistiche, Universit\`a degli studi di Palermo.}
\date{}
\maketitle

\begin{abstract}
In various stereological problems an $n$-dimensional convex body is intersected with an $(n-1)$-dimensional Isotropic Uniformly Random (IUR) hyperplane. In this paper the cumulative distribution function associated with the $(n-1)$-dimensional volume of such a random section is studied. This distribution is also known as chord length distribution and cross section area distribution in the planar and spatial case respectively. For various classes of convex bodies it is shown that these distribution functions are absolutely continuous with respect to Lebesgue measure. A Monte Carlo simulation scheme is proposed for approximating the corresponding probability density functions.
\end{abstract}

\section{Introduction} 
In a typical stereological problem we are presented with observations which originate from a lower dimension than the dimension of interest. A classical example is the Wicksell corpuscle problem \cite{Wicksell1925}. The setting of the problem is as follows, balls of varying size are randomly positioned in 3D space. This system of balls is intersected with a plane and the circular section profiles of the balls which happened to be cut by the section plane are observed. The problem is to determine the distribution of the radii of the 3D balls given the distribution of the radii of the observed 2D circular profiles.

An interesting generalization of this problem is to choose a convex shape other than the ball for the shape of the particles. Then, the distribution of observed section areas may be used to estimate the size distribution of the particles. The particles we consider are convex bodies, i.e. compact and convex sets with non-empty interior. In order to deal with such problems we study a class of distributions which is especially important in this setting. Suppose we take some convex body $K \subset \RR^3$ of choice and intersect $K$ with a random section plane. More generally, we may take a convex body $K \subset \RR^n$, and intersect it with a random $(n-1)$-dimensional hyperplane. The random section planes we consider are known as Isotropic Uniformly Random (IUR) planes. This roughly means that every plane which has a non-empty intersection with the convex body has equal probability of occurring. What can be said regarding the cumulative distribution function (CDF) $G_K$ associated with the $(n-1)$-dimensional volume of such a random section of $K$? In this paper we study this kind of distribution functions. In particular, we obtain results on absolute continuity. Whenever we refer to absolute continuity of a cumulative distribution function we mean absolute continuity with respect to Lebesgue measure. We know that the existence and the accurate approximation of the density of $G_K$ is an essential ingredient for defining estimators for particle size distributions in stereological problems. The existence results and the approximation procedure proposed in this paper are used in \cite{vdjagt2023} to design a nonparametric maximum likelihood procedure for estimating particle size distributions.

Given a convex body $K \subset \RR^2$ an IUR section of $K$ is the intersection of $K$ with a random line. The distribution function $G_K$ is then also known as a chord-length distribution function. Some results regarding this function may be found in \cite{Gates1987}. The author notes that it is typically assumed without proof that the CDF of a chord length distribution is absolutely continuous. Only for a limited set of convex polygons there are some results on absolute continuity. See for example \cite{Harutyunyan2009} for the chord length distribution function of a regular polygon which is absolutely continuous. 

For convex bodies $K \subset \RR^3$ the distribution function $G_K$ is sometimes called a cross section area distribution. In \cite{Santalo2004} it is noted that in a stereological setting it is of interest to obtain the density of $G_K$ for some basic shapes such as the simplex or the cube. However, to the best of our knowledge there are no results on whether $G_K$ has a density for a large class of convex bodies, especially in $\RR^n$ with $n \geq 3$. To overcome the difficulty in obtaining an expression for $G_K$, simulations may be used to find an approximation. In \cite{Ohser2000} a description is given for approximating $G_K$ (when $K$ is a polytope in $\RR^3$) and for how it may be used to estimate the size distribution of particles from a sample of observed section areas. 

The outline of this paper is as follows. First, necessary notation and definitions are introduced. Then, we discuss the importance of absolute continuity of $G_K$ for stereological estimation of particle size distributions. This is followed by various results on the distribution function $G_K$. In particular, we show that for a large class of convex bodies, $G_K$ is absolutely continuous. Finally, we propose a Monte Carlo simulation scheme to approximate the corresponding probability density function $g_K$ using density estimation techniques.

\section{Preliminaries}
In this section we introduce the necessary notation and definitions. In particular, we introduce some terminology from convex geometry, a standard reference is \cite{Schneider2013}. In $\RR^n$ a convex body is a convex and compact set with non-empty interior. Let $\mathcal{K}^n$ denote the class of convex bodies in $\RR^n$. Let $\vol_n(K)$ be the $n$-dimensional volume of $K$, its $n$-dimensional Lebesgue measure. $K$ and $L$ indicate convex bodies. Given a point $x \in \RR^n$, the translation of $K$ with $x$ is given by: $K + x =\{k+x:k\in K\}$. The sum of two sets, also known as the Minkowski sum, is defined as: $K + L =\{k+l:k\in K, l \in L\}$. The dilatation or scaling of $K$ with $\lambda > 0$ is given by: $\lambda K =\{\lambda k : k \in K\}$. $\partial K$ denotes the boundary of $K$. Given $x \in \RR^n$ and $r >0$, we write $B(x,r) = \{y \in \RR^n: ||x-y||<r\}$ and $\bar{B}(x,r) = \{y \in \RR^n: ||x-y||\leq r\}$ for the open and closed ball respectively, with radius $r$ centered at $x$. $\SO(n)$ denotes the rotation group of order $n$, containing all orthogonal $n \times n$-matrices of determinant one. Given $M \in \SO(n)$, the rotation of $K$ with $M$ is denoted by: $MK = \{Mk:k\in K\}$. We write $\interior K$ and $\relint K$ to denote the interior and relative interior of $K$ respectively. A convex body $K \in \mathcal{K}^n$ is strictly convex if for all $x,y \in K$ and $\lambda \in (0,1)$ we have $\lambda x + (1-\lambda)y \in \interior K$. A strictly convex body does not have any line segments in its boundary. The unit sphere in $\RR^n$ is denoted by $S^{n-1} = \{(x_1,\dots,x_n) \in \RR^n: x_1^2 + \dots + x_n^2 = 1\}$. The upper hemisphere in $\RR^n$ is given by: $S_{+}^{n-1} = \{(x_1,\dots,x_n) \in S^{n-1}: x_n \geq 0\}$. Let $\sigma_{n-1}$ denote the spherical measure on $S^{n-1}$, also known as the spherical Lebesgue measure on $S^{n-1}$. In integrals over (a subset of) $S^{n-1}$ the notation $\mathrm{d}\theta$ should be interpreted as $\mathrm{d}\sigma_{n-1}(\theta)$. A hyperplane may be parameterized via a unit normal vector $\theta \in S_{+}^{n-1}$ and its signed distance $s\in \RR$ to the origin:
\begin{equation}
    T_{\theta, s} = \{x \in \RR^n: \langle x,\theta\rangle = s\}, \label{hyperplane_eq}
\end{equation}
with $\langle \cdot,\cdot\rangle$ being the usual inner product in $\RR^n$. Given a convex body $K \in \mathcal{K}^n$ its inner section function $m_K : S^{n-1} \to [0,\infty)$ is defined by:
\begin{equation}
m_K(\theta) = \max_{s \in \RR}\vol_{n-1}\left(K \cap T_{\theta, s} \right). \label{inner_section_function}
\end{equation}
This function returns the maximal section volume for any given direction. We can now define what is meant by an Isotropic, Uniformly Random (IUR) plane hitting $K$. The following definition gives a convenient parameterization of IUR planes, see \cite{Baddeley2004} for IUR plane sections of convex bodies in $\RR^3$ (the generalization to $\RR^n$ is straightforward):

\begin{definition}[IUR plane]\label{iur_plane_def}
An IUR plane $T$ hitting a fixed $K \in \mathcal{K}^n$, $n \geq 2$, is defined as $T = T_{\Theta, S}$ where $(\Theta, S)$ has joint probability density, $f_K : S_{+}^{n-1} \times \RR \to [0, \infty)$ given by:
\begin{equation*}
    f_K(\theta, s) = \begin{cases}\frac{1}{\mu([K])} & \text{if } K \cap T_{\theta,s} \neq \emptyset \\
    0 & \text{otherwise,}
    \end{cases}
\end{equation*}
with $T_{\theta,s}$ as in (\ref{hyperplane_eq}) and
\begin{equation}
    \mu([K]) = \int_{S_{+}^{n-1}}\int_{-\infty}^\infty \mathds{1}{\{K \cap T_{\theta,s} \neq \emptyset\}}\mathrm{d}s\mathrm{d}\theta. \label{normalization_constant}
\end{equation}
\end{definition}

The notion of IUR planes was originally introduced in \cite{Davy1977}. It is important to highlight that there are other kinds of random planes which appear in stereological problems, hence care should be taken in considering the appropriate distribution. See \cite{Miles1978} for more details. Note that the distribution in Definition \ref{iur_plane_def} is a joint uniform distribution; the marginals are in general not uniform. We stress that the density $f_K$ prescribes the probability associated with the possible locations and orientations of the section plane, not the volumes of hyperplane sections. Fix $K \in \mathcal{K}^n$ and let $f_K$ be as in Definition \ref{iur_plane_def}. Integrating out the variable $s$, we obtain the marginal density:
% \begin{equation}
% f_{K,\Theta}(\theta) = \frac{L(p_{\theta}(K))}{\alpha_n\bar{b}(K)},  \text{\quad}  \theta \in S_{+}^{n-1}, \label{direction_marginal_density}
% \end{equation}
\begin{equation}
f_{K,\Theta}(\theta) = \frac{L(p_{\theta}(K))}{\mu([K])},  \text{\quad}  \theta \in S_{+}^{n-1}.\label{direction_marginal_density}
\end{equation}
In (\ref{direction_marginal_density}), $p_\theta(K)$ represents the orthogonal projection of $K$ on the line through the origin with direction $\theta$. $L(p_\theta(K))$ is then the length of this orthogonal projection, hence $L(p_\theta(K))$ may also be called the width of $K$ in direction $\theta$. The constant $\mu([K])$ is related to the average width $\bar{b}(K)$, via:
\begin{equation}
    \mu([K]) = \alpha_n \bar{b}(K).
\end{equation}
The average width is defined as:
\[\bar{b}(K) = \frac{1}{\alpha_n}\int_{S_{+}^{n-1}} L(p_{\theta}(K))\mathrm{d}\theta,\]
and the constant $\alpha_n$ is given by:
\begin{equation*}
    \alpha_n = \sigma_{n-1}\left(S_{+}^{n-1}\right) = \frac{\pi^{\frac{n}{2}}}{\Gamma\left(\frac{n}{2}\right)}.
\end{equation*}
Conditioning an IUR plane on a fixed direction yields a so-called Fixed orientation Uniformly Random (FUR) plane. Fix $\theta \in S^{n-1}$, let $a=a(\theta)$ be the smallest number such that $K\cap T_{\theta, a} \neq \emptyset$, similarly let $b=b(\theta)$ be the largest number such that $K\cap T_{\theta, b} \neq \emptyset$. Then, conditional on this direction $\Theta = \theta$, $S$ is uniformly distributed on the interval $[a, b]$, and we denote this conditional density by:
\begin{equation}
f_{S|\Theta}(s|\theta) = \begin{cases}\frac{1}{b(\theta) - a(\theta)} & \text{if } s \in [a(\theta), b(\theta)] \\
    0 & \text{otherwise,}
    \end{cases} \label{diretion_conditional_density}
\end{equation}
We may also write: $S|\Theta = \theta \sim \mathcal{U}(a(\theta),b(\theta))$. Given a cumulative distribution function (CDF) $F$ or a probability density function (PDF) $f$, we write $X\sim F$ or $X \sim f$ to indicate that the random variable $X$ is distributed according to $F$ or $f$ respectively. We are now ready to introduce the CDF of interest in this paper.

\begin{definition}[section volume CDF]\label{section_cdf_def}
Fix $K \in \mathcal{K}^n$, $n \geq 2$, let $f_K$ be as in Definition \ref{iur_plane_def}. Let $(\Theta, S) \sim f_K$, the random variable $Z = \vol_{n-1}(K\cap T_{\Theta, S})$ has cumulative distribution function $G_K$ which is given by:
\begin{align*}
    G_K(z) = \int_{S_{+}^{n-1}}\int_{\RR} \mathds{1}{\{\vol_{n-1}(K\cap T_{\theta, s}) \leq z\}}f_K(\theta, s)\mathrm{d}s\mathrm{d}\theta.
\end{align*}
We refer to $G_K$ as the section volume CDF of $K$.
\end{definition}
We remark that the expression of the CDF $G_K$ follows from the fact that $G_K(z) = P(Z \leq z) = \EE(\mathds{1}{\{Z \leq z\}})$ and the law of the unconscious statistician. In $\RR^2$ we may still refer to $G_K$ as chord length distribution function and in $\RR^3$ we may call it cross section area distribution function. 

\section{Stereological estimation of size distributions}
In this section we show how absolute continuity of $G_K$ and accurate approximation of its density $g_K$ is important for stereological estimation of particle size distributions. We consider the generalization of the Wicksell corpuscle problem as mentioned in the introduction. Suppose we pick some particle, a convex body $K \subset \RR^3$, and instances of $K$ of varying size are randomly positioned and oriented in $\RR^3$. Such an isotropic system of particles is often described in terms of a germ-grain model, see for example \cite{Ohser2000} and sections 6.5 and 10.5 in \cite{Chiu2013}. In this setting an isotropic typical particle is chosen and the particles are positioned in $\RR^3$ using a stationary point process. The particles have random sizes and a particle of size $\lambda$ is equal to $\lambda K$ up to rotation and translation. If the diameter of $K$ equals 1, then the size $\lambda$  of a particle is simply its diameter. The sizes of the particles are independent and identically distributed according to the distribution function $H$. The mean particle size is given by:
\[\EE(\Lambda) = \int_0^\infty \lambda\mathrm{d}H(\lambda).\]
Intersecting this system of particles with a plane, the distribution function associated with the area of a typical section profile is denoted by $F_A$. It can be shown, see for example \cite{vdjagt2023}, that $F_A$ is given by:
\begin{equation*}
F_A(a) = \frac{1}{\EE(\Lambda)}\int_0^\infty G_{K}\left(\frac{a}{\lambda^2}\right)\lambda\mathrm{d}H(\lambda). 
\end{equation*}
Let $a_{\text{max}}$ denote the largest possible section area over all planar sections of $K$. If $G_K$ is absolutely continuous and has Lebesgue density $g_K$, supported on $(0, a_{\text{max}})$, then $F_A$ has a density given by:
\begin{equation}
    f_A(a) = \frac{1}{\EE(\Lambda)}\int_{\sqrt{\frac{a}{a_{\text{max}}}}}^\infty g_{K}\left(\frac{a}{\lambda^2}\right)\frac{1}{\lambda}\mathrm{d}H(\lambda).\label{density_stereological_equation}
\end{equation}
Another derivation of (\ref{density_stereological_equation}) appears in chapter 16 of \cite{Santalo2004}. The main implication of (\ref{density_stereological_equation}) is that given a sample of observed section areas, corresponding to some system of particles, the likelihood is well-defined. This means that the size distribution $H$ may be estimated using likelihood-based methods of statistical inference. For any given candidate $H'$ for $H$ evaluating such a likelihood requires $g_K$ to be known. This density is in general hard to compute and one way to deal with this is to use the density approximation procedure presented in this paper, which can approximate it arbitrarily closely.

Suppose that we obtain a sample $A_1,\dots,A_N$ which is independent and identically distributed according to $f_A$. We wish to estimate the distribution function $H$ of the size distribution. This problem is identifiable, in \cite{vdjagt2023} it is shown that the profile area distribution $f_A$ uniquely determines $H$ under lenient assumptions. Moreover, the authors also derive a non-parametric estimator for the so-called length-biased size distribution via non-parametric maximum-likelihood, and show that it is consistent. We now briefly discuss the definition of this estimator, for further details we refer to  \cite{vdjagt2023}. The length-biased size distribution, or length-biased version of $H$ is given by:
\[H^b(\lambda):=\frac{\int_0^\lambda x\mathrm{d}H(x)}{\int_0^\infty x\mathrm{d}H(x)}.\]
As in the Wicksell corpuscle problem we are dealing with length-biased sampling, meaning that the probability that a particle is hit by the section plane is proportional to its size. Hence, while the size of a typical particle is distributed according to $H$, the size of a typical particle in the section plane is distributed according to $H^b$. Letting $A \sim f_A$, set $S =\sqrt{A}$ and let $f_S$ denote the density of $S$. Analogously, let $Z \sim g_K$, and let $g_K^S$ denote the density of $\sqrt{Z}$. Plugging the definitions of these densities into (\ref{density_stereological_equation}) yields:
\begin{equation}
    f_S(s) = \int_{0}^\infty g_K^S\left(\frac{s}{\lambda}\right)\frac{1}{\lambda}\mathrm{d}H^b(\lambda). \label{fs_density}
\end{equation}
Because $g_K^S$ is supported on $(0,\sqrt{a_{\text{max}}})$, in (\ref{fs_density}) the lower bound of the integration region is effectively $s/\sqrt{a_{\text{max}}}$ instead of $0$. Set $S_i = \sqrt{A_i}$ for $i \in \{1,\dots,N\}$, and let $s_1 < s_2 <\dots < s_N$ be a realization of the order statistics of $S_1,\dots,S_N$. The estimator $\hat{H}_N^b$ for $H^b$ is defined as a maximizer of the (scaled by $1/N$) log-likelihood:
\begin{align*}
    \hat{H}_N^b \in \argmax_{H^b \in \mathcal{F}_N^{+}} \frac{1}{N}\sum_{i=1}^N \log\left(\int_0^\infty g_K^S\left(\frac{s_i}{\lambda}\right)\frac{1}{\lambda}\mathrm{d}H^b(\lambda) \right). \label{hb_mle}
\end{align*}
Here, we maximize over $\mathcal{F}_N^{+}$, the class of all piece-wise constant distribution functions on $(0, \infty)$, with jump-locations restricted to the set of observations, the $s_i$'s. A single realization of $\hat{H}_N^b$ is shown in Figure \ref{mle_example_fig}. 

For the simulation result in Figure \ref{mle_example_fig}, each particle is a convex dodecahedron and the underlying size distribution is a standard exponential distribution. The corresponding length-biased distribution $H^b$ is a gamma distribution. A sample of size $N=1000$ from $f_S$ is used to compute $\hat{H}_N^b$. We note that for some applications an estimate of $H^b$ may be sufficient. If an estimate of $H$ is desired, a procedure which uses $\hat{H}_N^b$ to obtain an estimate of $H$ may be found in \cite{vdjagt2023}.

\begin{figure}[t]
    \centering
    \includegraphics{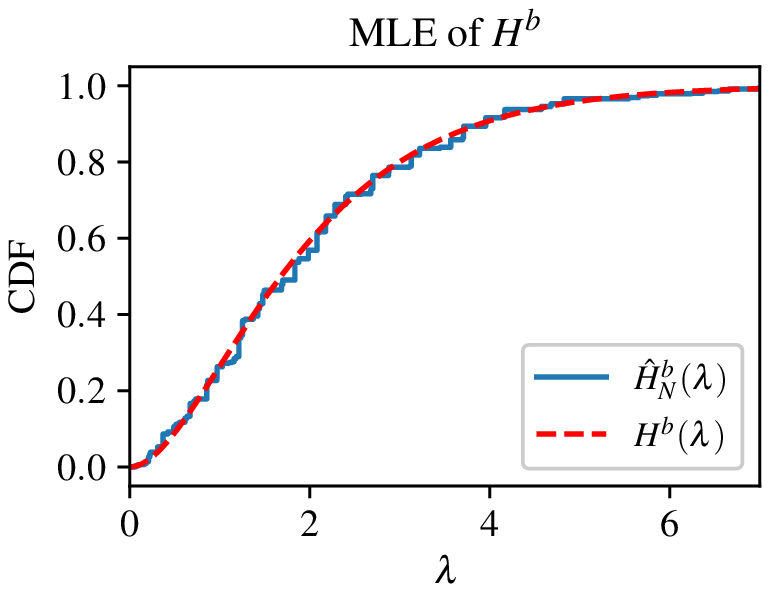}
    \caption{A single realization of the maximum likelihood estimator $\hat{H}_N^b$ ($N=1000$) and the true length-biased size distribution $H^b$.}
    \label{mle_example_fig}
\end{figure}

\section{Properties of the section volume CDF}

In this section we derive various properties of the section volume CDF as described in Definition \ref{section_cdf_def}. Given a convex body $K$ the following Lemma highlights some basic properties.

\begin{lemma}\label{basic_properties}
Fix $K, L \in \mathcal{K}^n$, let $G_K$, $G_L$ be their section volume CDF respectively. Let $z \in \RR$, then:
\begin{enumerate}
    \item Translation invariance: $G_{K+x}(z) = G_K(z)$ for all $x \in \RR^n$.
    \item Rotation invariance: $G_{MK}(z) = G_K(z)$ for all $M \in \SO(n)$.
    \item Scaling: $G_{\lambda K}(z) = G_K\left(z/\lambda^{n-1}\right)$ for all $\lambda > 0$.
    \item Inclusion: If $K \subset L$ then:
        \[G_L(z) \leq G_K(z)\frac{\bar{b}(K)}{\bar{b}(L)} + \left(1 - \frac{\bar{b}(K)}{\bar{b}(L)} \right).\]
\end{enumerate}
\end{lemma}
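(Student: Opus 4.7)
\medskip
\noindent\textbf{Plan.} All four items will be proved by massaging the definition
\[ G_K(z)\,\mu([K]) = \int_{S_+^{n-1}}\!\int_{\RR}\mathds{1}\{\vol_{n-1}(K\cap T_{\theta,s})\le z\}\,\mathds{1}\{K\cap T_{\theta,s}\ne\emptyset\}\,\mathrm ds\,\mathrm d\theta \]
through appropriate changes of variables, plus in item (4) a comparison of indicator functions. The recurring observation is that any isometry $\varphi$ of $\RR^n$ satisfies $\vol_{n-1}(\varphi(K)\cap T_{\theta,s}) = \vol_{n-1}(K\cap \varphi^{-1}(T_{\theta,s}))$, and that $\varphi^{-1}$ maps hyperplanes to hyperplanes in a way that can be pulled back into the $(\theta,s)$-parameterization. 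I also keep in mind that the width $L(p_\theta(\cdot))$ (and hence $\mu([\cdot])$ and $\bar b(\cdot)$) is invariant under rigid motions and scales homogeneously under dilatation.

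\medskip
\noindent\textbf{Items (1)--(3).} For translation invariance I note that $(K+x)\cap T_{\theta,s} = \bigl(K\cap T_{\theta,s-\langle x,\theta\rangle}\bigr)+x$; substituting $s'=s-\langle x,\theta\rangle$ in the inner integral (the range $\RR$ is preserved) and using $\mu([K+x])=\mu([K])$ gives $G_{K+x}=G_K$. For rotation invariance with $M\in\SO(n)$, the identity $(MK)\cap T_{\theta,s}=M\bigl(K\cap T_{M^{-1}\theta,s}\bigr)$ combined with the fact that $M$ preserves $(n-1)$-volume reduces the claim to invariance of the spherical measure $\sigma_{n-1}$ under $M$. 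The only subtlety is the hemisphere $S_+^{n-1}$: I will rewrite all integrals as half-integrals over the full sphere $S^{n-1}$ using $T_{\theta,s}=T_{-\theta,-s}$, perform the change of variable $\theta'=M^{-1}\theta$ there, and fold back. For scaling, $(\lambda K)\cap T_{\theta,s}=\lambda\bigl(K\cap T_{\theta,s/\lambda}\bigr)$ yields $\vol_{n-1}((\lambda K)\cap T_{\theta,s}) = \lambda^{n-1}\vol_{n-1}(K\cap T_{\theta,s/\lambda})$; after substituting $s'=s/\lambda$ and using $\mu([\lambda K])=\lambda\,\mu([K])$, the factors combine to replace $z$ by $z/\lambda^{n-1}$ inside $G_K$.

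\medskip
\noindent\textbf{Item (4).} This is where I expect the real work. Set
\[ A = \{(\theta,s)\in S_+^{n-1}\times\RR : K\cap T_{\theta,s}\ne\emptyset\},\qquad B = \{(\theta,s) : L\cap T_{\theta,s}\ne\emptyset\}. \]
Since $K\subset L$ we have $A\subset B$, $\mu([K])=|A|$ and $\mu([L])=|B|$ in the $\mathrm ds\,\mathrm d\theta$-measure. On $A$, the inclusion $K\subset L$ gives $\vol_{n-1}(K\cap T_{\theta,s})\le\vol_{n-1}(L\cap T_{\theta,s})$, so $\mathds{1}\{\vol_{n-1}(L\cap T_{\theta,s})\le z\}\le\mathds{1}\{\vol_{n-1}(K\cap T_{\theta,s})\le z\}$; on $B\setminus A$ the $L$-indicator is bounded by $1$. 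Splitting the $L$-integral over $B=A\sqcup(B\setminus A)$ therefore yields
\[ G_L(z)\,\mu([L]) \le G_K(z)\,\mu([K]) + \bigl(\mu([L])-\mu([K])\bigr). \]
Dividing by $\mu([L])$ and using $\mu([\,\cdot\,])=\alpha_n\bar b(\cdot)$ gives the stated inequality.

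\medskip
\noindent\textbf{Main obstacle.} Items (1)--(3) are essentially bookkeeping; the only pitfall is the hemispherical parameterization, which I resolve uniformly by doubling to $S^{n-1}$. The delicate step is (4), where one must remember that the two CDFs are normalized by different constants: the naive pointwise comparison of integrands is not enough, and the gap set $B\setminus A$ has to be estimated separately. Handling this gap is what produces the additive $1-\bar b(K)/\bar b(L)$ term and prevents a cleaner pointwise bound.
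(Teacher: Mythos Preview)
Your proposal is correct and follows essentially the same route as the paper: items (1)--(3) use the identical hyperplane identities and substitutions (including the doubling to $S^{n-1}$ for the rotation case), and your treatment of item (4) via splitting the integration region $B=A\sqcup(B\setminus A)$ is just the integral-level version of the paper's probabilistic conditioning argument (the paper invokes a result of Davy to say that an IUR plane for $L$ hits $K$ with probability $\bar b(K)/\bar b(L)$ and is then IUR for $K$, which is exactly your $|A|/|B|$ split). No gaps.
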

The translation and rotation invariance of IUR planes is a defining property of IUR planes, \cite{Davy1977}, and it may be used to prove property 1 and 2 in Lemma \ref{basic_properties}. The third property also appears in \cite{Santalo2004} for $n=3$. All of these properties are well-known for chord length distributions and the generalization to $(n-1)$-dimensional sections of convex bodies in $\RR^n$ is not difficult. For the sake of completeness, the proof of this Lemma may be found in the Appendix at the end of this paper.

We need Brunn's theorem (see for example \cite{Koldobsky2005}) to prove one of our main results:

\begin{theorem}[Brunn]\label{Brunn_thm}
Let $K \in \mathcal{K}^n$, $n \geq 2$. Fix $\theta \in S^{n-1}$. The function $f_{\theta} : \RR \to [0, \infty)$ given by:
\[f_{\theta}(s) = \vol_{n-1}(K \cap T_{\theta, s})^{\frac{1}{n-1}}\]
is concave on its support.
\end{theorem}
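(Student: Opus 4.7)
The plan is to derive Brunn's theorem from the Brunn-Minkowski inequality in dimension $n-1$. Fix $\theta \in S^{n-1}$. Because $K$ is compact and convex, the support of $f_\theta$ is a compact interval $[a,b] \subset \RR$. It suffices to prove, for any $s_0, s_1 \in [a,b]$ and $\lambda \in [0,1]$, with $s_\lambda := (1-\lambda)s_0 + \lambda s_1$, the inequality
\[
f_\theta(s_\lambda) \geq (1-\lambda) f_\theta(s_0) + \lambda f_\theta(s_1).
\]

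First I would identify each affine hyperplane $T_{\theta, s}$ with $\RR^{n-1}$ via the isometry $y \mapsto y - s\theta$, followed by a fixed orthonormal identification of $\theta^{\perp}$ with $\RR^{n-1}$. Let $A_s \subset \RR^{n-1}$ denote the image of $K \cap T_{\theta,s}$ under this isometry; then $A_s$ is convex and compact, and $\vol_{n-1}(A_s) = \vol_{n-1}(K \cap T_{\theta,s})$. The key geometric step is the Minkowski-sum containment
\[
(1-\lambda) A_{s_0} + \lambda A_{s_1} \subseteq A_{s_\lambda}.
\]
This follows directly from the convexity of $K$: given $x_0 \in A_{s_0}$ and $x_1 \in A_{s_1}$, the preimages $y_i = x_i + s_i \theta$ lie in $K$, the convex combination $(1-\lambda) y_0 + \lambda y_1$ again lies in $K$, and by linearity of the inner product it belongs to $T_{\theta, s_\lambda}$; its image under the isometry is precisely $(1-\lambda) x_0 + \lambda x_1$.

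Once the containment is established, I would invoke the Brunn-Minkowski inequality in $\RR^{n-1}$ applied to the convex compact sets $A_{s_0}$ and $A_{s_1}$, namely
\[
\vol_{n-1}\bigl((1-\lambda)A_{s_0} + \lambda A_{s_1}\bigr)^{1/(n-1)} \geq (1-\lambda)\vol_{n-1}(A_{s_0})^{1/(n-1)} + \lambda \vol_{n-1}(A_{s_1})^{1/(n-1)},
\]
and combine it with the monotonicity of $\vol_{n-1}$ under inclusion to obtain the desired concavity of $f_\theta$. A minor caveat is the case where either $s_0$ or $s_1$ is an endpoint of $[a,b]$, so that the corresponding slice is lower-dimensional; then the right-hand side either vanishes or can be handled by a continuity/limiting argument, so the inequality persists trivially.

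The proof has no serious obstacle, the only care required is the bookkeeping in identifying the affine hyperplanes $T_{\theta, s_0}, T_{\theta, s_1}, T_{\theta, s_\lambda}$ with a common copy of $\RR^{n-1}$ so that the Minkowski sum is meaningful and the containment can be verified componentwise. The Brunn-Minkowski inequality itself is used as a black box; reproving it here would be circular, since Brunn's theorem and Brunn-Minkowski are essentially equivalent formulations for convex bodies, and the latter is standard (see Koldobsky 2005).
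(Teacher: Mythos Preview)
Your proposal is correct and follows the standard route: the containment $(1-\lambda)A_{s_0} + \lambda A_{s_1} \subseteq A_{s_\lambda}$ from convexity of $K$, followed by Brunn--Minkowski in $\RR^{n-1}$. The paper does not actually prove Theorem~\ref{Brunn_thm}; it is quoted as a known result with a reference to Koldobsky. However, the paper \emph{does} reproduce essentially this same argument in its proof of Theorem~\ref{Brunn_strict_thm} (the strictly concave version for strictly convex $K$), where it projects the slices onto $T_{\theta,0}$, establishes the Minkowski-sum containment, and applies Brunn--Minkowski---exactly your plan, with the extra ingredient of strict inclusion to upgrade to strict concavity. So your approach coincides with the one the paper cites and adapts.
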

Ignoring the exponent $1/(n-1)$ in the definition of $f_{\theta}$, this function returns the volume of the intersection of $K$ with $T_{\theta, s}$. Because we fix $\theta$ this means the function considers volumes of parallel slices of $K$, and it is a function of the (signed) distance of the section plane to the origin. The statement of Brunn's Theorem inspires us to study a distribution function which is closely related to $G_K$:

\begin{definition}[Transformed section volume CDF]
Fix $K \in \mathcal{K}^n$, $n \geq 2$, let $f_K$ be as in Definition \ref{iur_plane_def}. Let $(\Theta, S) \sim f_K$, the random variable $Z = \vol_{n-1}(K\cap T_{\Theta, S})^{1/(n-1)}$ has cumulative distribution function $G_K^S$ which is given by:
\begin{align*}
    G_K^S&(z) = \int_{S_{+}^{n-1}}\int_\RR \mathds{1}{\{\vol_{n-1}(K\cap T_{\theta, s})^{\frac{1}{n-1}} \leq z\}}f_K(\theta, s)\mathrm{d}s\mathrm{d}\theta.
\end{align*}
We refer to $G_K^S$ as the transformed section volume CDF of $K$.
\end{definition}

This distribution function $G_K^S$ turns out to be more natural to study compared to $G_K$. This will become clear in the proof of the upcoming theorem. Note that, for $K \in \mathcal{K}^n$, $G_K$ and $G_K^S$ are related as follows:
\begin{equation*}
G_K^S(z) = G_K\left(z^{n-1}\right).
\end{equation*}
\begin{remark}\label{remark_absolute_continuity}
$G_K$ is absolutely continuous if and only if $G_K^S$ is absolutely continuous. After all, suppose that $G_K^S$ has probability density function $g_K^S$. Let $X \sim g_K^S$, then $X^{n-1} \sim G_K$ and via the well-known change of variables formula this random variable has the following probability density function:
\begin{equation}
g_K(z) = g_K^S\left(z^{\frac{1}{n-1}}\right)\frac{z^{\frac{2-n}{n-1}}}{n-1}. \label{g_and_h_relation}
\end{equation}
The converse case is analogous. 
\end{remark}

We now present one of the main theorems in this paper:

\begin{theorem}\label{strict_concavity_condition}
Let $K \in \mathcal{K}^n$, $n \geq 2$. Define the function $f_{\theta} : \RR \to [0, \infty)$ by:
\[f_{\theta}(s) = \vol_{n-1}(K \cap T_{\theta, s})^{\frac{1}{n-1}}.\]
If $f_{\theta}$ has a unique maximum and is continuous on $\RR$ for almost all $\theta \in S_{+}^{n-1}$, then $G_K$ is absolutely continuous with respect to Lebesgue measure.
\end{theorem}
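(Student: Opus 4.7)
The approach is to pass to the transformed CDF $G_K^S$, condition on $\Theta$, and show that each conditional distribution of $f_\theta(S)$ given $\Theta=\theta$ is absolutely continuous by combining Brunn's theorem with the unique-maximum hypothesis. By Remark \ref{remark_absolute_continuity}, it is equivalent to prove absolute continuity of $G_K^S$. Writing $Z=\vol_{n-1}(K\cap T_{\Theta,S})^{1/(n-1)}$ as in the definition of $G_K^S$ and using the conditional density (\ref{diretion_conditional_density}) of $S$ given $\Theta=\theta$, for any Borel $B\subset\RR$,
\[\PP(Z\in B)=\int_{S_+^{n-1}}\frac{\lambda_1\bigl(f_\theta^{-1}(B)\cap[a(\theta),b(\theta)]\bigr)}{b(\theta)-a(\theta)}\,f_{K,\Theta}(\theta)\,\mathrm{d}\theta,\]
where $\lambda_1$ is one-dimensional Lebesgue measure. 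Hence it suffices to prove, for almost every $\theta$ satisfying the hypothesis, that $\lambda_1\bigl(f_\theta^{-1}(B)\cap[a(\theta),b(\theta)]\bigr)=0$ whenever $\lambda_1(B)=0$.

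Fix such a $\theta$ with unique maximizer $s^*$. Continuity of $f_\theta$ on $\RR$, together with the fact that it vanishes outside $[a(\theta),b(\theta)]$ while attaining a strictly positive maximum (since $K$ has nonempty interior), forces $s^*\in(a(\theta),b(\theta))$. Brunn's theorem (Theorem \ref{Brunn_thm}) makes $f_\theta$ concave on $[a(\theta),b(\theta)]$, and concavity combined with uniqueness of the maximum forces $f_\theta$ to be strictly increasing on $[a(\theta),s^*]$ and strictly decreasing on $[s^*,b(\theta)]$. I treat only the increasing branch; the other is symmetric. Set $\phi:=f_\theta|_{[a(\theta),s^*]}$, which is continuous, strictly increasing, and concave, with range $[0,f_\theta(s^*)]$. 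Its inverse $\psi:=\phi^{-1}$ is continuous, strictly increasing, and convex (the inverse of a concave increasing function is convex, by a direct estimate on the defining inequality). A continuous convex function on a closed bounded interval is absolutely continuous (it equals the indefinite integral of its nondecreasing right derivative), hence $\psi$ has the Luzin (N) property, and
\[\lambda_1\bigl(f_\theta^{-1}(B)\cap[a(\theta),s^*]\bigr)=\lambda_1(\psi(B))=0\quad\text{whenever}\quad\lambda_1(B)=0.\]
Combining with the symmetric statement on $[s^*,b(\theta)]$ and substituting into the display above yields $\PP(Z\in B)=0$ for every Lebesgue-null Borel $B$, i.e.\ absolute continuity of $G_K^S$, and hence of $G_K$.

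The main obstacle, and the step I expect to require the most care to spell out, is the passage from ``$f_\theta$ concave with a unique maximum'' to the Luzin (N) property for each monotone branch. The argument hinges on the chain: (concave increasing)$^{-1}$ is convex $\Rightarrow$ continuous convex on a closed interval is absolutely continuous $\Rightarrow$ Luzin (N). The behaviour at the endpoint $a(\theta)$ (where $\phi$ starts at zero) is handled precisely by the global continuity hypothesis on $f_\theta$, which is what guarantees that $\phi$ extends continuously down to $0$. Once this is in hand, the remainder is a routine conditioning plus Fubini; only mild measurability of $\theta\mapsto\lambda_1\bigl(f_\theta^{-1}(B)\cap[a(\theta),b(\theta)]\bigr)$ is required, and this follows from standard regularity of section volumes.
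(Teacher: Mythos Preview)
Your proposal is correct and follows essentially the same route as the paper: reduce to $G_K^S$, condition on $\Theta=\theta$, split $f_\theta$ at its unique maximizer into two strictly monotone branches via Brunn's theorem, and use convexity of the branch inverses to obtain absolute continuity of each conditional law, then integrate out over $\theta$. The paper computes the conditional CDF $G_K^S(z\mid\theta)$ explicitly as a convex combination of two convex increasing functions and invokes ``convex $\Rightarrow$ absolutely continuous'', whereas you go straight to the null-set criterion via the Luzin~(N) property of the convex inverse; this is a cosmetic difference in packaging, not a different idea.
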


\begin{proof}
Given $K \in \mathcal{K}^n$, let $G_K$ be its section volume CDF and let $G_K^S$ be its transformed section volume CDF. We show that $G_K^S$ is absolutely continuous, from this it follows that $G_K$ is absolutely continuous by Remark \ref{remark_absolute_continuity}. By conditioning the distribution function $G_K^S$ on $\Theta$ having a particular value, $G_K^S$ may be written as a mixture distribution:
\begin{align*}
G_K^S(z) = \PP\left(f_{\Theta}(S) \leq z \right) = \int_{S_{+}^{n-1}}\PP\left(f_{\theta}(S) \leq z \Big| \Theta = \theta \right)f_{K,\Theta}(\theta)\mathrm{d}\theta,
\end{align*}
with $f_{K,\Theta}(\theta)$ being the marginal density of $\Theta$ as in (\ref{direction_marginal_density}) and $f_{\theta}(\cdot)$ as in the statement of the theorem. For notation convenience, write:
\begin{equation}
G_K^S(z|\theta) := \PP\left(f_{\theta}(S) \leq z \Big| \Theta = \theta \right).
\end{equation}
Let $a=a(\theta)$ and $b=b(\theta)$ be as in (\ref{diretion_conditional_density}) such that $S|\Theta = \theta \sim \mathcal{U}(a,b)$. Choose $\theta \in S_{+}^{n-1}$ such that $f_{\theta}$ has a unique maximum and is continuous on $\RR$. By definition of $a$ we know that $T_{\theta,a}$ intersects $K$ only through the boundary of $K$. By the assumed continuity of $f_{\theta}$ we have $f_{\theta}(a) = 0$ and similarly: $f_\theta(b) = 0$. Note that $f_\theta$ has the following domain and codomain:
\begin{equation}
f_{\theta}:[a,b] \to D_{\theta}, \text{\ with \ } D_{\theta} = \left[0, m_K(\theta)^{\frac{1}{n-1}}\right], \label{domain_codomain}
\end{equation}
and $m_K(\cdot)$ as in (\ref{inner_section_function}). By Brunn's theorem $f_\theta$ is concave on its support and by assumption it attains its maximum in a single point $c$. As a result, $f_\theta$ is strictly increasing on $(a,c)$ and strictly decreasing on $(c,b)$. Therefore, $f_{\theta}$ restricted to $(a,c)$ is invertible, and its inverse is convex and strictly increasing. Let:
\[f_{\theta}^{+}:\left(0,m_K(\theta)^{\frac{1}{n-1}}\right) \to (a, c),\]
denote this inverse. Similarly, $f_{\theta}$ restricted to $(c,b)$ has an inverse:
\[f_{\theta}^{-}:\left(0,m_K(\theta)^{\frac{1}{n-1}}\right) \to (c, b),\] 
which is concave and strictly decreasing. Write:
\[p := \PP(S \in (a,c)|\Theta=\theta).\]
By using the fact that $S|\Theta = \theta \sim \mathcal{U}(a,b)$ we find $p=(c-a)/(b-a)$. Moreover, we obtain the following expression for $G_K^S(z|\theta)$:
\begin{align*}
G_K^S(z|\theta) &= \PP\left(f_{\theta}(S) \leq z \Big| \Theta = \theta, S\in(a,c) \right)p + \PP\left(f_{\theta}(S) \leq z \Big| \Theta = \theta, S\in(c,b) \right)(1-p)\\
&= \PP\left(S \leq f_{\theta}^{+}(z) \Big| \Theta = \theta, S\in(a,c) \right)p + \PP\left(S \geq f_{\theta}^{-}(z) \Big| \Theta = \theta, S\in(c,b) \right)(1-p)\\
&= \frac{f_{\theta}^{+}(z) - a}{c - a}p + \left(1 - \frac{f_{\theta}^{-}(z) - c}{b - c}\right)(1-p).
\end{align*}
Because $f_{\theta}^{-}$ is concave and strictly decreasing, $-f_{\theta}^{-}$ is convex and strictly increasing. Therefore, $G_K^S(\cdot|\theta)$ is a convex combination of two functions both of which are convex and strictly increasing on the interval $D_{\theta}$ (as in (\ref{domain_codomain})). As a result, $G_K^S(\cdot|\theta)$ is convex, continuous, and strictly increasing on $D_{\theta}$, which is the support of this distribution function. We conclude that for almost all $\theta \in S_{+}^{n-1}$, $G_K^S(\cdot|\theta)$ is absolutely continuous because it is convex on its support and continuous on $\RR$. Finally, this means that $G_K^S$ as a mixture of absolutely continuous distribution functions is absolutely continuous by Fubini's theorem.
\end{proof}

\begin{remark}\label{convexity_hk_remark}
The arguments used in the proof of Theorem \ref{strictly_convex_result} do not hold for general convex bodies. For general convex bodies the function $f_{\theta}$ is concave by Brunn's theorem. Therefore, the set of points at which it attains its maximum may be an interval rather than a single point. When this is the case, $G_K^S(\cdot|\theta)$ is still convex on its support, but it is discontinuous in the point $m_K(\theta)^{1/(n-1)}$, which is the right boundary point of its support. As a result, for any convex body $K \in \mathcal{K}^n$, $G_K^S$ is convex on the interval $(0,d_K)$ with: $d_K = \min_{\theta \in S^{n-1}}m_K(\theta)^{1/(n-1)}$.
\end{remark}

\subsection{Strictly convex bodies}

Let us now consider a particular class of convex bodies known as strictly convex bodies. The class of strictly convex bodies is large in a precise sense. For one, the class of convex bodies which are not smooth or strictly convex form a set of first Baire category, see \cite{Zamfirescu1987} for details. We have not yet mentioned smooth convex bodies, loosely speaking it means that their boundary is smooth. An important result we obtain in this section is that given that $K \in \mathcal{K}^n$ is strictly convex, then $G_K$ is absolutely continuous. Therefore, we show that for a large class of convex bodies $G_K$ is absolutely continuous. The main tools to obtain this result are the famous Brunn-Minkowski inequality and a variant of Brunn's theorem. In the field of convex geometry the importance of the Brunn-Minkowski inequality cannot be overstated, we refer to the review paper \cite{Gardner2002} for variants of the theorem and its applications.

\begin{theorem}[Brunn-Minkowski]\label{brunnmin}
Given convex bodies $K, L \in \mathcal{K}^n$ and $0 < \lambda < 1$ the following inequality holds:
\begin{equation*}
    \vol_{n}(\lambda K + (1-\lambda) L)^{\frac{1}{n}} \geq \lambda\vol_{n}(K)^{\frac{1}{n}} + (1-\lambda)\vol_{n}(L)^{\frac{1}{n}},
\end{equation*}
with equality if and only if $K$ and $L$ are equal up to translation and dilatation.
\end{theorem}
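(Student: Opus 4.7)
The plan is to establish the Brunn-Minkowski inequality in three stages of increasing generality --- axis-aligned boxes, finite disjoint unions of boxes, and then arbitrary convex bodies by approximation --- and to address the equality case last. For the base case, if $K=\prod_{i=1}^n[0,a_i]$ and $L=\prod_{i=1}^n[0,b_i]$, then $\lambda K + (1-\lambda) L$ is again a box with side lengths $\lambda a_i + (1-\lambda) b_i$, so after dividing through by $\prod_{i=1}^n(\lambda a_i + (1-\lambda) b_i)^{1/n}$ the desired inequality reduces to
\begin{equation*}
1 \;\geq\; \lambda \prod_{i=1}^n \left(\frac{a_i}{\lambda a_i + (1-\lambda) b_i}\right)^{1/n} + (1-\lambda) \prod_{i=1}^n \left(\frac{b_i}{\lambda a_i + (1-\lambda) b_i}\right)^{1/n},
\end{equation*}
which follows from the weighted AM-GM inequality applied term by term to each of the two products on the right.

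Next I would extend the inequality to finite unions of axis-aligned boxes with pairwise disjoint interiors, arguing by induction on the total number of boxes making up $K$ and $L$. When some coordinate direction contains at least two distinct $K$-boxes, pick a hyperplane $H$ orthogonal to that axis splitting $K$ into box-unions $K^+,K^-$; by continuity, translate $L$ so that a parallel hyperplane $H'$ splits it into $L^+,L^-$ with $\vol_n(K^\pm)/\vol_n(K)=\vol_n(L^\pm)/\vol_n(L)$. The Minkowski sums $\lambda K^\pm + (1-\lambda) L^\pm$ then lie on opposite sides of $\lambda H + (1-\lambda) H'$, hence are disjoint inside $\lambda K + (1-\lambda) L$; adding the inductive bounds and performing another AM-GM step yields the inequality for the union.

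To pass to general $K,L\in\mathcal{K}^n$, I would approximate each of them from the inside by unions $K_\varepsilon, L_\varepsilon$ of axis-aligned cubes of side $\varepsilon$. Since $\lambda K_\varepsilon + (1-\lambda) L_\varepsilon \subset \lambda K + (1-\lambda) L$ and $\vol_n(K_\varepsilon)\to\vol_n(K)$, $\vol_n(L_\varepsilon)\to\vol_n(L)$ as $\varepsilon\to 0$, the inequality survives the limit.

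The main obstacle is the equality case. My strategy is to chase equality back through the three stages: in the base case, equality in AM-GM forces the ratios $a_i/b_i$ to be independent of $i$, so the two boxes are homothetic; the inductive step propagates equality only if the partitions respect the same homothety and the matched sub-bodies are themselves dilates; and the approximation step then forces $K$ and $L$ to coincide up to translation and dilatation. If tracking equality through the induction and through the limit becomes delicate, a cleaner fallback is to deduce Brunn-Minkowski from the Pr\'ekopa--Leindler inequality applied to $\mathds{1}_K$ and $\mathds{1}_L$ (with parameter chosen to optimise the resulting multiplicative estimate), and then invoke the known equality case for Pr\'ekopa--Leindler to conclude the rigidity statement.
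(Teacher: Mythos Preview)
The paper does not actually prove Theorem~\ref{brunnmin}; it is stated as a classical tool and the reader is referred to the survey \cite{Gardner2002}. So there is no ``paper's own proof'' to compare against --- the authors simply quote the result and use both the inequality and its equality condition as black boxes in the proof of Theorem~\ref{Brunn_strict_thm} and Lemma~\ref{lemma_polytope_parallel_edges}.

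Your outline is the standard Hadwiger--Ohmann brick argument and is essentially correct for the inequality itself. Two small points: in the inductive step you do not need a further AM--GM step after adding --- once the volume ratios $\vol_n(K^\pm)/\vol_n(K)=\vol_n(L^\pm)/\vol_n(L)=:\theta,1-\theta$ are matched, the inductive bounds already give $\theta\big(\lambda\vol_n(K)^{1/n}+(1-\lambda)\vol_n(L)^{1/n}\big)^n$ and $(1-\theta)\big(\cdots\big)^n$, which sum directly to what you want; and in the approximation step you should also note (or arrange) that the inner cube approximations are themselves finite disjoint unions of boxes so that the previous stage applies.

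Your caution about the equality case is well placed: tracking equality through the Hadwiger--Ohmann induction and then through the limit is genuinely delicate and is not usually done that way. The Pr\'ekopa--Leindler route you mention does work, but be aware that the equality characterisation for Pr\'ekopa--Leindler (Dubuc) is itself a nontrivial result, so you are trading one subtlety for another. Since the paper only \emph{cites} the equality condition rather than proving it, either fallback (or simply citing a standard reference such as Schneider \cite{Schneider2013} or Gardner \cite{Gardner2002}) is acceptable here.
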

The equality condition in Theorem \ref{brunnmin} means that there exist $\delta > 0$ and $x \in \RR^n$ such that $K = \delta L + x$. In order to prove that $G_K$ is absolutely continuous for strictly convex $K \in \mathcal{K}^n$ we show that the conditions in Theorem \ref{strict_concavity_condition} are satisfied. First, we need the following Lemma:

\begin{lemma}\label{strict_inclusion_lemma}
Let $K, L \in \mathcal{K}^n$ with $K \subset \interior L$, then $\vol_{n}(K) < \vol_n(L)$.
\end{lemma}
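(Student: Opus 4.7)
The plan is to reduce the strict inequality to exhibiting a nonempty open subset of $L \setminus K$, since any nonempty open subset of $\RR^n$ contains a ball and hence has strictly positive Lebesgue measure. Because $K \subset L$, we have $\vol_n(L) = \vol_n(K) + \vol_n(L \setminus K)$, so strict inequality will follow at once from $\vol_n(L \setminus K) > 0$. The natural candidate for the witnessing open set is $\interior L \setminus K$.

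Openness of $\interior L \setminus K$ is immediate, since $\interior L$ is open and $K$ is closed (compactness). The real content is the nonemptiness. To produce a point in $\interior L \setminus K$, I would start from a boundary point $y \in \partial L$; such a $y$ exists because $L$ is compact and has nonempty interior, so $L \neq \interior L$. Since $K \subset \interior L$, we have $y \notin K$, and as $K$ is closed there is an open neighborhood $U$ of $y$ disjoint from $K$. Then I would invoke the elementary fact that every boundary point of a convex body with nonempty interior lies in the closure of its interior, so $U \cap \interior L \neq \emptyset$; any point of this intersection belongs to $\interior L \setminus K$.

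Putting the pieces together yields
\[
\vol_n(L) \;\geq\; \vol_n(K) + \vol_n(\interior L \setminus K) \;>\; \vol_n(K),
\]
as required. The only step with any subtlety is the convex-geometric fact $\partial L \subset \overline{\interior L}$, which I expect to be the main (and still very mild) obstacle; it follows from the standard line-segment argument: taking any $x_0 \in \interior L$ and a boundary point $y$, every point strictly between $x_0$ and $y$ lies in $\interior L$, so $y$ is approximated by interior points. Everything else is measure-theoretic bookkeeping.
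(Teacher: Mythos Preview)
Your proof is correct and follows essentially the same strategy as the paper: exhibit a nonempty open subset of $L\setminus K$ (namely $\interior L\setminus K$) and use additivity of Lebesgue measure. The only cosmetic difference is that the paper produces the witness point by starting from a point of $\partial K$ (which lies in $\interior L$) and perturbing into the complement of $K$, whereas you start from a point of $\partial L$ (which lies outside $K$) and perturb into $\interior L$; the paper's variant happens not to invoke convexity at all, while yours uses it only through the mild fact $\partial L\subset\overline{\interior L}$.
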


Its proof is given in the Appendix at the end of this paper. We show that the strict convexity of a convex body carries over to strict concavity of the function $f_\theta$ (as in Theorem \ref{Brunn_thm}).

\begin{theorem}\label{Brunn_strict_thm}
Let $K \in \mathcal{K}^n$ be a strictly convex body, $n \geq 2$. Fix $\theta \in S^{n-1}$. The function $f_{\theta} : \RR \to [0, \infty)$ given by:
\[f_{\theta}(s) = \vol_{n-1}(K \cap T_{\theta, s})^{\frac{1}{n-1}}\]
is continuous on $\RR$ and strictly concave on its support.
\end{theorem}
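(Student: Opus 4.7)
The statement breaks naturally into two parts: continuity of $f_\theta$ on $\RR$, and strict concavity on the support $[a(\theta), b(\theta)]$. Concavity on the support is immediate from Theorem~\ref{Brunn_thm}, which already yields continuity on the open interval $(a(\theta), b(\theta))$, and $f_\theta$ vanishes off its support, so only the two endpoints require attention. Strict convexity of $K$ forces $K \cap T_{\theta,a(\theta)}$ and $K \cap T_{\theta,b(\theta)}$ to be single points, since any line segment contained in the intersection of $K$ with a supporting hyperplane would lie on $\partial K$ and violate strict convexity. Hence $f_\theta$ vanishes at both endpoints, and a short compactness argument (if $s_n \downarrow a(\theta)$ and some $q_n \in K \cap T_{\theta,s_n}$ stayed a fixed distance from the unique touching point, a limit point would produce a second point in $K \cap T_{\theta,a(\theta)}$) shows that the slice shrinks to this point, so $\vol_{n-1}(K \cap T_{\theta,s}) \to 0$ and continuity at the endpoint follows.

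For strict concavity I would argue by contradiction. If $f_\theta$ is concave but not strictly concave, then by a standard concavity fact there exist $s_1 < s_2$ strictly interior to the support such that $f_\theta$ is affine on $[s_1, s_2]$. Write $K_i = K \cap T_{\theta,s_i}$ and, for $\lambda \in (0,1)$, set $s = \lambda s_1 + (1-\lambda) s_2$. Identifying the three parallel hyperplanes $T_{\theta,s_1}, T_{\theta,s_2}, T_{\theta,s}$ with $\theta^{\perp}$ via orthogonal projection, convexity of $K$ gives the inclusion $\lambda K_1 + (1-\lambda) K_2 \subset K \cap T_{\theta,s}$, and Brunn--Minkowski (Theorem~\ref{brunnmin}) applied inside $\theta^{\perp}$ produces the chain
\begin{equation*}
f_\theta(s) \;\geq\; \vol_{n-1}\bigl(\lambda K_1 + (1-\lambda) K_2\bigr)^{\frac{1}{n-1}} \;\geq\; \lambda f_\theta(s_1) + (1-\lambda) f_\theta(s_2) \;=\; f_\theta(s),
\end{equation*}
so both inequalities must be equalities.

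The Brunn--Minkowski equality case then gives that $K_1$ and $K_2$ are homothetic, $K_2 = \delta K_1 + y$ for some $\delta > 0$ and translation $y$. Equality in the set inclusion, combined with Lemma~\ref{strict_inclusion_lemma} applied inside $\theta^{\perp}$ (the Minkowski sum has nonempty relative interior because $s_1, s_2$ are interior to the support), forces the set identity $K \cap T_{\theta,s} = \lambda K_1 + (1-\lambda) K_2$. Pick any $p_1 \in \partial K_1$; then $p_1 \in \partial K$, since any $p_1 \in \interior K$ would carry a ball in $K$ and hence a relative neighborhood in $T_{\theta,s_1}$ into $K_1$. Let $p_2 := \delta p_1 + y \in \partial K_2 \subset \partial K$ be its homothetic partner. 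Because $\lambda K_1 + (1-\lambda) K_2$ is itself a dilate-and-translate of $K_1$, the point $\lambda p_1 + (1-\lambda) p_2$ lies on its boundary, hence on $\partial K$, for every $\lambda \in [0,1]$. Varying $\lambda$ places the entire segment from $p_1$ to $p_2$ on $\partial K$, contradicting strict convexity of $K$.

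The main obstacle I expect is the bookkeeping in this last step: one has to identify homothetically corresponding boundary points of $K_1$ and $K_2$ in the correct parallel hyperplanes and verify that their linear interpolation really stays on the boundary of the Minkowski sum rather than slipping into its interior. The fact that $\lambda K_1 + (1-\lambda) K_2$ is again a homothet of $K_1$ is what rescues this step, but the translation $y$ must be set up carefully across the two hyperplanes so that $p_2 = \delta p_1 + y$ pairs the correct points.
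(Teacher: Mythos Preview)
Your argument is correct but takes a different, more indirect route than the paper. The paper proves strict concavity directly: for $r<t$ in the support and $s=\lambda r+(1-\lambda)t$, strict convexity of $K$ immediately gives $\lambda K_r+(1-\lambda)K_t\subset(\interior K)\cap T_{\theta,s}=\relint(K_s)$, since every point of the Minkowski sum is a proper convex combination of two \emph{distinct} points of $K$. Projecting to $T_{\theta,0}$ and applying Lemma~\ref{strict_inclusion_lemma} then yields a \emph{strict} volume inequality $\vol_{n-1}(\lambda K_r+(1-\lambda)K_t)<\vol_{n-1}(K_s)$, after which the ordinary Brunn--Minkowski inequality completes the chain $f_\theta(s)>\lambda f_\theta(r)+(1-\lambda)f_\theta(t)$ in one stroke. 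This sidesteps both the equality characterization in Theorem~\ref{brunnmin} and the homothety/boundary-point bookkeeping you correctly flag as the delicate part of your approach. Your contradiction argument---deduce homothetic slices from Brunn--Minkowski equality, then exhibit a segment in $\partial K$ by tracking matched boundary points through the set identity $K_s=\lambda K_1+(1-\lambda)K_2$ at every $\lambda$---is the classical alternative and is valid, just longer; note that the set identity you need is slightly stronger than what Lemma~\ref{strict_inclusion_lemma} literally states, though it follows easily for convex bodies of equal volume. On continuity at the endpoints, your compactness argument showing that the slices shrink to the single touching point is a welcome addition; the paper records $f_\theta(a)=f_\theta(b)=0$ and asserts continuity from there rather briskly.
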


\begin{proof}
The proof is a slight variation of a proof of Brunn's theorem using the Brunn-Minkowski inequality as found in \cite{Koldobsky2005} (pp 18, 19). Fix $\theta \in S^{n-1}$. Choose $r,t$ in the support of $f_{\theta}$, such that $r < t$. Let $\lambda \in (0,1)$, set $s = \lambda r + (1-\lambda)t$ and consider the hyperplane sections $K_r := K \cap T_{\theta, r}$, $K_s := K \cap T_{\theta, s}$ and $K_t := K \cap T_{\theta, t}$. We show that:
\begin{equation}
\lambda K_r + (1-\lambda) K_t \subset (\interior K) \cap T_{\theta, s}. \label{strict_concavity_proof_1}
\end{equation}
Let $z \in \lambda K_r + (1-\lambda) K_t$, then $z = \lambda x + (1-\lambda)y$ for some $x \in K_r$ and some $y \in K_t$.
Because also $x,y \in K$ we have $z \in \interior K$ due to the strict convexity of $K$. Also, note that $\langle z, \theta\rangle = \lambda \langle x , \theta\rangle + (1-\lambda)\langle y , \theta\rangle = \lambda r + (1-\lambda)t = s$. Hence, $z \in T_{\theta, s}$, which proves (\ref{strict_concavity_proof_1}). It can readily be verified that: $(\interior K) \cap T_{\theta, s} = \relint \left(K_s\right)$. Combining this with (\ref{strict_concavity_proof_1}) we find: $\lambda K_r + (1-\lambda) K_t \subset \relint \left(K_s\right)$. Let $\Pi(L)$ denote the orthogonal projection of $L$ on the hyperplane $T_{\theta, 0}$. Note that $\lambda K_r + (1-\lambda) K_t$ and $K_s$ are subsets of $T_{\theta, s}$, projecting them on $T_{\theta, 0}$ preserves the inclusion:
\[\Pi(\lambda K_r + (1-\lambda) K_t) \subset \relint\Pi(K_s).\]
Identifying $T_{\theta, 0}$ with $\RR^{n-1}$ we may regard $\Pi(\lambda K_r + (1-\lambda) K_t)$ and $\Pi(K_s)$ as convex bodies in $\RR^{n-1}$. Under this identification $\Pi(\lambda K_r + (1-\lambda) K_t) \subset \interior \Pi(K_s)$ and applying Lemma \ref{strict_inclusion_lemma} yields:
\begin{equation*}
\vol_{n-1}\left(\Pi(\lambda K_r + (1-\lambda) K_t)\right) < \vol_{n-1}\left(\Pi(K_s)\right).
\end{equation*}
Keep in mind that projecting a set on $T_{\theta, 0}$ does not affect its $(n-1)$-dimensional volume. Note that we may change the order of these projections and the Minkowski sum:
\begin{equation*}
\lambda \Pi(K_r) + (1-\lambda) \Pi(K_t) = \Pi(\lambda K_r + (1-\lambda) K_t),  
\end{equation*}
where the sum of sets is considered in the plane $T_{\theta, 0}$. Hence,
\begin{equation}
    \vol_{n-1}\left(\Pi(K_s)\right) > \vol_{n-1}(\Pi(\lambda K_r + (1-\lambda) K_t)) = \vol_{n-1}(\lambda \Pi(K_r) + (1-\lambda) \Pi(K_t)). \label{strict_concavity_proof_2}
\end{equation}
Once again, $\Pi(K_r)$ and $\Pi(K_t)$ may be identified as convex bodies in $\RR^{n-1}$ and we may apply Brunn-Minkowski's (B.M.) inequality to obtain the desired result:
\begin{align*}
    f_{\theta}(s) &= \vol_{n-1}(K_s)^{\frac{1}{n-1}}  \\
     &= \vol_{n-1}(\Pi(K_s))^{\frac{1}{n-1}}  \\
     \overset{(\ref{strict_concavity_proof_2})}&{>} \vol_{n-1}(\lambda \Pi(K_r) + (1-\lambda) \Pi(K_t))^\frac{1}{n-1} \\
    \overset{\text{B.M.}}&{\geq} \lambda\vol_{n-1}(\Pi(K_r))^{\frac{1}{n-1}} + (1-\lambda)\vol_{n-1}(\Pi(K_t))^{\frac{1}{n-1}} \\
    &= \lambda\vol_{n-1}(K_r)^{\frac{1}{n-1}} + (1-\lambda)\vol_{n-1}(K_t)^{\frac{1}{n-1}}  \\
    &= \lambda f_{\theta}(r) + (1-\lambda)f_{\theta}(t).
\end{align*}
Continuity of $f_{\theta}(\cdot)$ can be shown as follows. Let $a=a(\theta)$ and $b=b(\theta)$ be as in (\ref{diretion_conditional_density}). By definition of $a$ we know that $T_{\theta,a}$ intersects $K$ only through the boundary of $K$. This intersection only contains a single point, if another point were in the intersection this would imply that the boundary of $K$ contains a line segment which contradicts the strict convexity of $K$. As a result: $f_{\theta}(a) = 0$ and similarly: $f_\theta(b) = 0$. Because $a$ and $b$ are the only possible points of discontinuity, $f_{\theta}(\cdot)$ is continuous.
\end{proof}

Because a bounded concave function has a maximum, strict concavity then implies that the maximum is unique. We obtain as a direct consequence of Theorem \ref{strictly_convex_result} and Theorem \ref{strict_concavity_condition}:
\begin{corollary}\label{strictly_convex_result}
Let $K \in \mathcal{K}^n$ strictly convex, and let $G_K$ be its section volume CDF. Then, $G_K$ is absolutely continuous.
\end{corollary}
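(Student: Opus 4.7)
The plan is to verify the hypotheses of Theorem \ref{strict_concavity_condition} and deduce the result directly. Since $K$ is strictly convex, Theorem \ref{Brunn_strict_thm} applies to \emph{every} $\theta \in S^{n-1}$, giving that $f_\theta$ is continuous on $\mathbb{R}$ and strictly concave on its support. So the only thing left to check is that $f_\theta$ attains its maximum at a unique point.

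First, I would note that the support of $f_\theta$ is the compact interval $[a(\theta), b(\theta)]$ (using the notation from \eqref{diretion_conditional_density}), since $K$ is compact. A continuous function on a compact set attains its maximum, hence the maximizer set is nonempty. Next, I would argue that strict concavity on the support forces this maximizer to be unique: if $f_\theta$ took its maximum value $M$ at two distinct points $s_1 \neq s_2$ in the support, then strict concavity would give $f_\theta\bigl(\tfrac{1}{2}s_1 + \tfrac{1}{2}s_2\bigr) > \tfrac{1}{2}f_\theta(s_1) + \tfrac{1}{2}f_\theta(s_2) = M$, a contradiction.

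With uniqueness of the maximum and continuity on $\mathbb{R}$ established for every $\theta \in S_{+}^{n-1}$ (in particular for almost every such $\theta$), Theorem \ref{strict_concavity_condition} applies and yields that $G_K$ is absolutely continuous with respect to Lebesgue measure.

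There is no real obstacle here; the corollary is an immediate packaging of Theorems \ref{Brunn_strict_thm} and \ref{strict_concavity_condition}, with the one observation being the elementary fact that strict concavity plus attainment of a maximum on a compact support implies a unique maximizer.
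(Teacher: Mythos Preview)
Your proposal is correct and follows essentially the same approach as the paper: the paper simply remarks that strict concavity forces the maximum to be unique and then invokes Theorems \ref{Brunn_strict_thm} and \ref{strict_concavity_condition}, which is exactly what you do (with slightly more detail on attainment and uniqueness).
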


Let us now consider approximation of convex bodies which are not necessarily strictly convex. We show that for any $K \in \mathcal{K}^n$ the CDF $G_K$ can be approximated arbitrarily closely by the CDF $G_L$ for some strictly convex $L \in \mathcal{K}^n$. This is due to the fact that any convex body may be approximated by a smooth and strictly convex body. A quantitatively useful statement is the following, see Theorem 1.5 in \cite{Klee1959}:

\begin{lemma}\label{strictly_convex_approximation}
Let $K \in \mathcal{K}^n$ a convex body with $0 \in \interior K$. Let $0 < \lambda < 1$. There exists a smooth and strictly convex body $L \in \mathcal{K}^n$ such that:
\begin{equation*}
    \lambda K \subset L \subset K. 
\end{equation*}
\end{lemma}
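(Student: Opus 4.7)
The plan is to build $L$ by combining two classical operations on convex bodies: Minkowski addition with a small Euclidean ball, which produces a unique outer unit normal at every boundary point and so supplies smoothness, and passage to the polar dual, which exchanges smoothness and strict convexity. Neither operation alone yields both properties simultaneously — ball-addition preserves any flat pieces on the boundary, and the polar of a non-smooth body (for instance a polytope) is not strictly convex — so the strategy is to chain them.

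First I would give myself some room by fixing $\lambda' \in (\lambda, 1)$; since $0 \in \interior K$, a short convex-combination argument gives $\lambda K \subset \interior(\lambda' K)$ at positive Hausdorff distance. Set $M := (\lambda' K)^\circ + \delta_1 \bar{B}(0,1)$ for a small $\delta_1 > 0$. Because $(\lambda' K)^\circ$ is a convex body with $0$ in its interior, $M$ is well-defined, and the ball-addition yields a unique outer unit normal at every boundary point of $M$ — namely the direction from the unique metric projection onto $(\lambda' K)^\circ$. Hence $M$ is smooth, so the polar dual $L_0 := M^\circ$ is strictly convex, and $L_0 \subset ((\lambda' K)^\circ)^\circ = \lambda' K$. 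Hausdorff-continuity of the polar map on convex bodies containing $0$ in the interior gives $L_0 \to \lambda' K$ as $\delta_1 \to 0$, so shrinking $\delta_1$ if necessary yields $\lambda K \subset L_0 \subset \lambda' K$.

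At this stage $L_0$ is strictly convex but may fail to be smooth, so I would apply the smoothing step once more and set $L := L_0 + \delta_2 \bar{B}(0,1)$ for small $\delta_2 > 0$. Smoothness of $L$ follows from the same nearest-point argument, and the extra ingredient is that ball-addition preserves strict convexity: for strictly convex $C$, each outer unit normal $v$ is realized at a unique $p(v) \in \partial C$, and the map $v \mapsto p(v) + \delta v$ parameterizes $\partial(C + \delta \bar{B}(0,1))$ by pairwise distinct points, ruling out boundary line segments. Choosing $\delta_2$ small enough to keep $L \subset K$ — possible since $L_0 \subset \lambda' K \subset \interior K$ at positive distance — and observing $\lambda K \subset L_0 \subset L$ completes the construction.

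The main obstacle I expect is verifying the structural facts just invoked: that ball-addition always produces unique outer normals, that it preserves strict convexity, and that the polar map is continuous in Hausdorff distance on convex bodies with $0$ in the interior. These are standard in convex geometry but are not entirely free. The most delicate point is likely to be making the Hausdorff continuity of the polar quantitative enough to pick $\delta_1$, and then $\delta_2$, so that the sandwich $\lambda K \subset L \subset K$ survives both operations; one would want explicit bounds in terms of an inner and outer ball for $K$ to control how much $L_0$ differs from $\lambda' K$ and how much $L$ differs from $L_0$.
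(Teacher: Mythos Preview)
The paper does not actually prove this lemma; it is stated as a direct citation of Theorem~1.5 in Klee (1959) and is used as a black box. So there is no ``paper's own proof'' to compare against beyond the reference.

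Your construction is a legitimate self-contained route to the same conclusion, and it is essentially correct. The chain \emph{polar $\to$ add a ball $\to$ polar $\to$ add a ball} works because: (i) adding a ball to any convex body yields a body with a unique outer normal at every boundary point (smoothness, via the nearest-point projection); (ii) polarity exchanges smoothness and strict convexity for bodies containing $0$ in their interior; and (iii) adding a ball to a strictly convex body keeps it strictly convex, since any segment in $\partial(L_0+\delta_2\bar B)$ would have a common outer normal $v$, forcing both endpoints to be $p(v)+\delta_2 v$. Your phrasing ``pairwise distinct points'' is slightly oblique here; the cleaner statement is that a boundary segment would share a single outer normal, which strict convexity of $L_0$ forbids. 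The Hausdorff continuity of the polar map on bodies with $0$ in the interior is indeed standard (e.g.\ Schneider, \emph{Convex Bodies}, Lemma~1.8.14), and it gives you the room to choose $\delta_1,\delta_2$ small enough for the sandwich to survive.

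What your approach buys over a bare citation is transparency: one sees exactly which two operations manufacture smoothness and strict convexity, and the whole argument stays inside elementary convex geometry. The cost is that you must invoke or reprove the three structural facts you flagged, whereas the paper simply imports the finished statement from Klee.
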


\begin{theorem}
Given a convex body $K \in \mathcal{K}^n$, there exists a sequence of strictly convex bodies $K_m\in \mathcal{K}^n$, $m \in \NN$, such that the sequence of the corresponding section volume CDFs $(G_{K_m})_{m \in \NN}$ converges pointwise to $G_K$ as $m \to \infty$. 
\end{theorem}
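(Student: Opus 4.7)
The plan is to approximate $K$ from inside by smooth, strictly convex bodies via Lemma \ref{strictly_convex_approximation}, and then sandwich the distribution functions $G_{K_m}$ between two versions of $G_K$ using the inclusion bound of Lemma \ref{basic_properties} together with the scaling identity. First, by translation invariance (property 1 of Lemma \ref{basic_properties}), I may assume $0 \in \interior K$. For each integer $m \geq 2$, Lemma \ref{strictly_convex_approximation} applied with $\lambda = 1 - 1/m$ produces a smooth and strictly convex body $K_m \in \mathcal{K}^n$ with
\[\left(1 - \tfrac{1}{m}\right) K \;\subset\; K_m \;\subset\; K.\]
This is the candidate approximating sequence, and in particular each $K_m$ is strictly convex as required.

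Next, I would apply property 4 of Lemma \ref{basic_properties} twice. The inclusion $K_m \subset K$ gives
\[G_K(z) \;\leq\; G_{K_m}(z)\frac{\bar{b}(K_m)}{\bar{b}(K)} + \left(1 - \frac{\bar{b}(K_m)}{\bar{b}(K)}\right),\]
which rearranges to a lower bound on $G_{K_m}(z)$. The inclusion $(1-1/m)K \subset K_m$, combined with the scaling identity $G_{(1-1/m)K}(z) = G_K\!\bigl(z/(1-1/m)^{n-1}\bigr)$ from property 3 and the homogeneity $\bar{b}((1-1/m)K) = (1-1/m)\bar{b}(K)$, gives the matching upper bound
\[G_{K_m}(z) \;\leq\; G_K\!\left(\tfrac{z}{(1-1/m)^{n-1}}\right)\frac{(1-1/m)\bar{b}(K)}{\bar{b}(K_m)} + \left(1 - \frac{(1-1/m)\bar{b}(K)}{\bar{b}(K_m)}\right).\]
Since the width $L(p_\theta(\cdot))$ is monotone under set inclusion, integrating over $\theta$ yields $(1-1/m)\bar{b}(K) \leq \bar{b}(K_m) \leq \bar{b}(K)$, so both prefactors $\bar{b}(K_m)/\bar{b}(K)$ and $(1-1/m)\bar{b}(K)/\bar{b}(K_m)$ tend to $1$ as $m \to \infty$.

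Finally, I would pass to the limit. For $z > 0$ the argument $z/(1-1/m)^{n-1}$ decreases to $z$, and since $G_K$ is right-continuous as a CDF, $G_K\!\bigl(z/(1-1/m)^{n-1}\bigr) \to G_K(z)$; for $z \leq 0$ both $G_K(z)$ and $G_{K_m}(z)$ vanish because section volumes are non-negative and the set of tangential $(\theta,s)$ is Lebesgue-null. Taking $\liminf$ in the first inequality and $\limsup$ in the second then pinches $G_{K_m}(z)$ to $G_K(z)$ at every $z \in \RR$. I expect the most delicate point to be the behaviour at potential jump points of $G_K$: the internal approximation forces the upper bound to probe $G_K$ only from the right, so right-continuity (automatic for any CDF) is precisely the ingredient that is needed, and no left-continuity assumption has to be smuggled in. The routine verifications, namely monotonicity of $\bar{b}(\cdot)$ under inclusion and its homogeneity under scaling, complete the argument.
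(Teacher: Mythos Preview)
Your proposal is correct and follows essentially the same approach as the paper: translate so that $0 \in \interior K$, apply Lemma~\ref{strictly_convex_approximation} with $\lambda_m \uparrow 1$ to obtain strictly convex $K_m$ sandwiched between $\lambda_m K$ and $K$, use property~4 of Lemma~\ref{basic_properties} on both inclusions, invoke the scaling identity and the monotonicity/homogeneity of $\bar{b}$, and close with right-continuity of $G_K$. Your explicit treatment of $z \leq 0$ and your remark that right-continuity (not left-continuity) is exactly what the internal approximation needs are nice touches that the paper leaves implicit, but the argument is otherwise the same.
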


\begin{proof}
Let $K \in \mathcal{K}^n$ and let $G_K$ be its section volume CDF. By property 1 of Lemma \ref{basic_properties} we may assume without loss of generality that $0 \in \interior K$. Set $\lambda_m := 1 - 1/(m+1)$ for $m \in \NN$. Then $0 < \lambda_m < 1$ and $\lambda_m$ increases to 1 as $m \to \infty$. Using Lemma \ref{strictly_convex_approximation}, for $\lambda_m$ let $K_m$ be a smooth and strictly convex body such that: $\lambda_m K \subset K_m \subset K$. Let $z \in \RR$, by property 4 of Lemma \ref{basic_properties} we find:
\begin{align}
        G_K(z)\frac{\bar{b}(K)}{\bar{b}(K_m)} - \left(1 - \frac{\bar{b}(K)}{\bar{b}(K_m)}\right) \leq G_{K_m}(z) \leq G_{\lambda_m K}(z)\frac{\bar{b}(\lambda_m K)}{\bar{b}(K_m)} + \left(1 - \frac{\bar{b}(\lambda_m K)}{\bar{b}(K_m)} \right).
  \label{continuity_proof_bound}
\end{align}
By property 3 of Lemma \ref{basic_properties} we have $G_{\lambda_m K}(z) = G_K\left(z/\lambda_m^{n-1} \right)$. Note that $z / \lambda_m^{n-1}$ decreases towards $z$ as $m \to \infty$. Because $G_K$ is a CDF it is right-continuous, therefore:
\begin{equation}
    \lim_{m \to \infty} G_{\lambda_m K}(z) = \lim_{m \to \infty}G_K\left(\frac{z}{\lambda_m^{n-1}} \right) = G_K(z). \label{scaled_section_cdf_limit}
\end{equation}
Note that: $\lambda_m \bar{b}(K) = \bar{b}(\lambda_m K) \leq \bar{b}(K_m) \leq \bar{b}(K)$. As a result: 
\begin{equation}
    \lim_{m \to \infty} \bar{b}(K_m) = \bar{b}(K). \label{mean_width_limit}
\end{equation}
Combining (\ref{continuity_proof_bound}) with (\ref{scaled_section_cdf_limit}) and (\ref{mean_width_limit}), we obtain $\lim_{m \to \infty} G_{K_m}(z) = G_K(z)$.
\end{proof}

\subsection{Polytopes}

In this section we study polytopes, which are especially of interest for practical applications. Being examples of non-strictly convex bodies, they are not covered by Corollary \ref{strictly_convex_result}. The main result we obtain in this section is that the section volume CDF of a full-dimensional convex polytope is absolutely continuous. In order to obtain this result for polytopes, we need to deal with the regions where the function $f_{\theta}$, as in Brunn's theorem, is constant. The following lemma shows that this can only happen if the polytope has parallel edges.

\begin{lemma}\label{lemma_polytope_parallel_edges}
    Let $P \subset \RR^n$ be a full-dimensional convex polytope, $n \geq 2$. Fix $\theta \in S_{+}^{n-1}$ and define the function $f_{\theta} : \RR \to [0, \infty)$ by:
\[f_{\theta}(s) = \vol_{n-1}(P \cap T_{\theta, s})^{\frac{1}{n-1}}.\]
Suppose $f_\theta$ attains its maximum on the entire interval $[s_{-}, s_{+}]$, with $s_{-}< s_{+}$. Then, any plane $T_{\theta, s}$ with $s \in [s_{-}, s_{+}]$ intersects the same edges of $P$ and these edges are parallel.
%Then, the edges of $P$ intersected by any plane $T_{\theta, s}$ with $s \in [s_{-}, s_{+}]$ are all parallel.
\end{lemma}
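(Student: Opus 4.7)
The plan is to apply the equality case of the Brunn--Minkowski inequality to conclude that the parallel sections $S_s := P \cap T_{\theta,s}$ for $s \in [s_-,s_+]$ are all translates of one another, with translation vector depending linearly on $s$, and then to deduce that each vertex of the section sweeps along a single edge of $P$ parallel to this common translation direction.

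\textbf{Setup and the translate property.} I would work in the orthogonal projection $\Pi$ onto $T_{\theta,0}$ (identifying the latter with $\RR^{n-1}$) and write $T_s := \Pi(S_s)$. As in the proof of Theorem \ref{Brunn_strict_thm}, convexity of $P$ yields the inclusion $\lambda T_{s_1} + (1-\lambda) T_{s_2} \subset T_{\lambda s_1 + (1-\lambda) s_2}$ for $s_1,s_2 \in [s_-,s_+]$ and $\lambda \in (0,1)$. Since $f_\theta$ is constant on $[s_-,s_+]$, the volumes $\vol_{n-1}(T_{s_1})$, $\vol_{n-1}(T_{s_2})$ and $\vol_{n-1}(T_{\lambda s_1 + (1-\lambda) s_2})$ coincide, so combining with Brunn--Minkowski both the inclusion and the B.--M. inequality must hold with equality. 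Equality in the inclusion (two convex bodies with one contained in the other and of equal volume) gives $\lambda T_{s_1} + (1-\lambda)T_{s_2} = T_{\lambda s_1 + (1-\lambda) s_2}$, while the equality clause of Theorem \ref{brunnmin} forces $T_{s_1}$ and $T_{s_2}$ to be homothetic; equal volumes then promote the homothety to a pure translation $T_{s_2} = T_{s_1} + u$ for some $u \in T_{\theta,0}$. Specialising to $s_1 = s_-$, $s_2 = s_+$ and varying $\lambda$ yields $T_s = T_{s_-} + \tfrac{s - s_-}{s_+ - s_-}\,u$ for every $s \in [s_-,s_+]$, which lifts back to $\RR^n$ as $S_s = S_{s_-} + (s-s_-)v$ with $v := \theta + u/(s_+ - s_-)$.

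\textbf{From translates to parallel edges, and the main obstacle.} Each vertex of $S_{s_-}$ lies on some edge of $P$; under the translation $S_s = S_{s_-} + (s-s_-)v$, the $i$th vertex traces the straight line segment $v_i(s) = v_i(s_-) + (s-s_-)v$ inside $\partial P$. To conclude that each $v_i(s)$ remains on a single edge $e_i$ of $P$ throughout $[s_-,s_+]$, I would invoke the elementary fact that two distinct edges of a polytope sharing a common vertex cannot be parallel (else they would be collinear, contradicting extremality of the shared vertex in the $1$-skeleton). Hence a straight trajectory in the fixed direction $v$ lying on $\partial P$ cannot switch edges at a vertex of $P$, so the edges $e_1,\dots,e_m$ of $P$ intersected by $T_{\theta,s}$ are the same for every $s \in [s_-,s_+]$, and each $e_i$ is parallel to $v$ (hence pairwise parallel). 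The main obstacle is precisely this final combinatorial step: ensuring that the assignment ``$i$th vertex of $S_s$ $\mapsto$ edge of $P$ containing it'' does not change with $s$, which ultimately reduces to the non-parallelism of distinct edges meeting at a vertex of a polytope.
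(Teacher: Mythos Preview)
Your proposal is correct and follows essentially the same route as the paper: both invoke the equality case of Brunn--Minkowski to deduce that the sections on $[s_-,s_+]$ are translates of one another with translation vector linear in $s$, and then track the resulting straight-line vertex trajectories into the edges of $P$. The paper carries this out directly in $\RR^n$ (without your projection to $T_{\theta,0}$) and dismisses your ``main obstacle'' in a single sentence as evident, so your handling of that final combinatorial step is actually more careful than the paper's.
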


\begin{proof}
    Let $a = a(\theta)$ and $b=b(\theta)$ as in (\ref{diretion_conditional_density}). For $s \in (a, b)$ the intersection $P\cap T_{\theta,s}$ is an $(n-1)$-dimensional polytope, and its vertices are the intersections of $T_{\theta,s}$ with the edges of $P$. By Brunn's theorem we know that $f_{\theta}$ is concave on its support. The set of points at which a concave function attains its maximum is convex, hence it is a nondegenerate interval or a single point. By assumption it is the interval $[s_{-}, s_{+}]$. By Brunn-Minkowski's inequality, and in particular its equality condition, we know that all sections $\{P\cap T_{\theta, s}: s \in [s_{-}, s_{+}]\}$ are equal up to dilatation and translation. But, because all such sections have equal volume, these sections then have to be equal up to translations. Write $P_{s_{-}} = P \cap T_{\theta, s_{-}}$ and $P_{s_{+}} = P \cap T_{\theta, s_{+}}$. Because $P_{s_{+}}$ is equal to $P_{s_{-}}$ up to translation there exists a $x \in \RR^n$ such that $P_{s_{+}} = P_{s_{-}} + x$. Let $s \in [s_{-},s_{+}]$, we claim that:
    \begin{equation}
        P \cap T_{\theta, s} = P_{s_{-}} + \frac{s-s_{-}}{s_{+}-s_{-}}x =: Q(s). \label{polytope_parallel_edges_claim}
    \end{equation}
    Let $z \in Q(s)$, then there exists a $y \in P_{s_{-}}$ such that:
    \begin{align*}
        z = y + \frac{s-s_{-}}{s_{+}-s_{-}}x = \frac{s_{+}-s}{s_{+}-s_{-}}y + \left(1 -  \frac{s_{+}-s}{s_{+}-s_{-}}\right)(x+y).
    \end{align*}
    Since $y \in P_{s_{-}}$ and $(x+y) \in P_{s_{-}} + x = P_{s_{+}}$, it follows that $z$ is the convex combination of two points in $P$, hence $z\in P$. Moreover, we have $\langle y,\theta \rangle = s_{-}$ and $\langle x+y,\theta \rangle = s_{+}$. A direct computation yields: $\langle z,\theta\rangle = s$. This means that $Q(s) \subset P \cap T_{\theta,s}$. Because $Q(s)$ is a translation of $P_{s_{-}}$ and since $P \cap T_{\theta,s}$ is equal to $P_{s_{-}}$ up to a translation we necessarily have that (\ref{polytope_parallel_edges_claim}) holds. Therefore, for any vertex $v$ of $P_{s_{-}}$, $v + ((s-s_{-})/(s_{+}-s_{-}))x$ is a vertex of $P \cap T_{\theta, s}$. It is evident that all vertices of the polytopes $\{P\cap T_{\theta, s}: s \in [s_{-}, s_{+}]\}$ lie on parallel line segments which are subsets of the edges of $P$, this finishes the proof. 
\end{proof}

In the next theorem we combine some of the techniques used earlier in this paper and Lemma \ref{lemma_polytope_parallel_edges} to show that the section volume CDF of any full-dimensional convex polytope is absolutely continuous.

\begin{theorem}
Let $P \subset \RR^n$ be a full-dimensional convex polytope, $n \geq 2$. Let $G_P$ be its section volume CDF. Then, $G_P$ is absolutely continuous.
\end{theorem}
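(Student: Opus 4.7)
The plan is to adapt the proof of Theorem~\ref{strict_concavity_condition}, accounting for the fact that for a polytope $P$, the function $f_\theta$ can attain its maximum on a non-degenerate interval $[c(\theta), d(\theta)]$ for a positive-measure set of directions (as already happens for the cube). I will work with $G_P^S$; by Remark~\ref{remark_absolute_continuity} this is equivalent to working with $G_P$. Two preliminary observations: (i) $f_\theta$ is continuous on $\RR$ for almost every $\theta$, since a discontinuity at $a(\theta)$ or $b(\theta)$ would force the extreme slice to be a facet of $P$ and hence $\theta$ to be normal to one of the finitely many facets of $P$; (ii) on each of the intervals $(a,c)$ and $(d,b)$ the function $f_\theta$ is strictly monotone, because a constant sub-interval would, by concavity of $f_\theta$ (Brunn), force the maximum value to extend, contradicting the definition of $c$ or $d$.

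These two observations let me repeat the convex-inverse calculation of Theorem~\ref{strict_concavity_condition} on each monotone side, yielding that $G_P^S(\cdot\mid\theta)$ is the sum of an absolutely continuous CDF and an atom of mass $q(\theta) := (d(\theta)-c(\theta))/(b(\theta)-a(\theta))$ located at $m_P(\theta)^{1/(n-1)}$; note that $q(\theta) = 0$ off the ``bad'' set $\Theta_{\mathrm{bad}} := \{\theta : c(\theta) < d(\theta)\}$. By Fubini, the absolutely continuous pieces aggregate into an absolutely continuous contribution to $G_P^S$, so it suffices to show that the atom contribution
\[
    A(z) := \int_{\Theta_{\mathrm{bad}}} q(\theta)\, \mathds{1}\!\left\{ m_P(\theta)^{1/(n-1)} \leq z \right\} f_{P,\Theta}(\theta)\, d\theta
\]
is absolutely continuous in $z$.

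By Lemma~\ref{lemma_polytope_parallel_edges}, $\Theta_{\mathrm{bad}}$ partitions into finitely many pieces $\Theta_E$, one for each combinatorial choice of a subset $E$ of parallel edges of $P$ that carries the constant-max sections. On each $\Theta_E$ the vertices of the max section are the explicit rational functions $p_i(\theta, s) = u_i + \frac{s - \theta \cdot u_i}{\theta \cdot v}\, v$ of $\theta$ (with $u_i$ the endpoints of the edges in $E$ and $v$ their common direction), so $m_P$ is a rational function of $\theta$ on $\Theta_E$. The main obstacle is to verify that $m_P$ is \emph{not} constant on any $\Theta_E$: perturbing $\theta$ in the open set $\Theta_E$ changes the ratios $\theta\cdot(u_i - u_j)/(\theta\cdot v)$, which non-trivially deforms the shape of the section polytope and hence its $(n-1)$-volume (for instance, for the cube one computes $m_P(\theta) = 1/\theta_n$ on the relevant pieces). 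Once non-constancy is established, the critical set of $m_P$ on $\Theta_E$ is a proper analytic subvariety, hence Lebesgue-null, and the coarea formula (equivalently, Sard's theorem) gives absolute continuity of the pushforward of the bounded measure $q(\theta) f_{P,\Theta}(\theta)\, d\theta$ under $\theta \mapsto m_P(\theta)^{1/(n-1)}$. Summing over the finitely many pieces $\Theta_E$ and combining with the absolutely continuous part from the previous paragraph yields absolute continuity of $G_P^S$, and hence of $G_P$.
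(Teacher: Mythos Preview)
Your decomposition is exactly that of the paper: condition on $\theta$, use Brunn to split $[a,b]$ into monotone pieces plus a possible plateau, invoke the convex-inverse argument of Theorem~\ref{strict_concavity_condition} on the monotone pieces, and reduce to showing that the atom contribution
\[
\int_{\Theta_{\mathrm{bad}}} q(\theta)\,\mathds{1}\{m_P(\theta)^{1/(n-1)}\le z\}\,f_{P,\Theta}(\theta)\,\mathrm{d}\theta
\]
is absolutely continuous in $z$. The paper also partitions $\Theta_{\mathrm{bad}}$ into finitely many pieces according to the combinatorial type of the parallel-edge set furnished by Lemma~\ref{lemma_polytope_parallel_edges}. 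So up to this point you and the paper coincide.

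The divergence is in how the atom part is handled, and here your argument has a soft spot. You assert that $m_P$ is a non-constant rational function on each $\Theta_E$ and then appeal to Sard/coarea; the non-constancy is justified only by the sentence ``perturbing $\theta$ \dots non-trivially deforms the shape \dots and hence its $(n-1)$-volume''. That ``hence'' is not obvious: a priori the section could deform while keeping its volume fixed. The paper bypasses this entirely with a one-line geometric identity. If $\phi_i\in S^{n-1}$ is the common direction of the parallel edges on $\Theta_E$, then orthogonally projecting any maximal slice onto $T_{\phi_i,0}$ sends every vertex to the projection of the corresponding edge, so the projected polytope is the \emph{same} $(n-1)$-dimensional body for all $\theta\in\Theta_E$; call its volume $v_i$. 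The standard cosine relation between a planar set and its orthogonal projection then gives
\[
m_P(\theta)=\frac{v_i}{|\langle\theta,\phi_i\rangle|},\qquad \theta\in\Theta_E.
\]
This formula simultaneously (i) proves the non-constancy you needed, and (ii) makes the pushforward computation elementary: $\langle\Omega,\phi_i\rangle$ has an explicit Lebesgue density for $\Omega$ uniform on $S^{n-1}$, so $(v_i/|\langle\Omega,\phi_i\rangle|)^{1/(n-1)}$ does too, and the atom contribution is absolutely continuous without any appeal to Sard or the coarea formula. Your route is salvageable, but the paper's projection identity is both the missing justification for your non-constancy claim and a strictly simpler finish.
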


\begin{proof}
    Given $\theta \in S_{+}^{n-1}$ define the function $f_{\theta} : \RR \to [0, \infty)$ by:
\[f_{\theta}(s) = \vol_{n-1}(P \cap T_{\theta, s})^{\frac{1}{n-1}}.\]
Let $B \subset \RR$ be a Borel set of Lebesgue measure zero. Let $f_P$ be as in Definition \ref{section_cdf_def} and let $(\Theta,S) \sim f_P$. As in the proof of Theorem \ref{strict_concavity_condition}, we condition on $\Theta=\theta$ and write:
\begin{align*}
\PP\Big(\vol_{n-1}(P\cap T_{\Theta, S})^{\frac{1}{n-1}} \in B\Big) = \PP\left(f_{\Theta}(S) \in B \right) = \int_{S_{+}^{n-1}}\PP\left(f_{\theta}(S) \in B \Big| \Theta = \theta \right)f_{P,\Theta}(\theta)\mathrm{d}\theta,
\end{align*}
with $f_{P,\Theta}(\theta)$ being the marginal density of $\Theta$ as in (\ref{direction_marginal_density}). In order to show that $G_P$ is absolutely continuous it is sufficient to show that $\PP\left(f_{\Theta}(S) \in B \right) = 0$. Let $a=a(\theta)$ and $b=b(\theta)$ be as in (\ref{diretion_conditional_density}) such that $S|\Theta = \theta \sim \mathcal{U}(a,b)$. Note that $f_{\theta}$ is continuous on $\RR$ for almost all $\theta \in S_{+}^{n-1}$. For almost all $\theta \in S_{+}^{n-1}$ the section planes $T_{\theta, s}$ enter the polytope through a vertex as $s$ runs from $a(\theta)$ to $b(\theta)$. For any such $\theta$, $f_{\theta}(a) = 0$ and $f_{\theta}(b) = 0$, because a vertex has no $(n-1)$-dimensional volume. As $a$ and $b$ are the only possible points of discontinuity, $f_{\theta}$ is continuous on $\RR$ for almost all $\theta \in S_{+}^{n-1}$. 

By Brunn's theorem we know that $f_{\theta}$ is concave on its support. The set of points at which a concave function attains its maximum is convex, hence it is a nondegenerate interval or a single point. Denote this set by: $[s_{-}(\theta), s_{+}(\theta)]$, in the case it consists of a single point, $s_{-}(\theta) = s_{+}(\theta)$. Write: $p_1 = \PP(S \in (a, s_{-})|\Theta =\theta)$, $p_2 = \PP(S \in [s_{-}, s_{+}]|\Theta =\theta)$ and $p_3 = \PP(S \in (s_{+}, b)|\Theta =\theta)$. We may write:
\begin{align*}
\begin{split}
    \PP(f_{\theta}(S) \in B|\Theta = \theta) ={}& \PP(f_{\theta}(S) \in B|\Theta=\theta, S\in(a,s_{-}))p_1 + \PP(f_{\theta}(S) \in B|\Theta=\theta, S\in[s_{-},s_{+}])p_2 + \\
    & + \PP(f_{\theta}(S) \in B|\Theta=\theta, S\in(s_{+},b))p_3. \\
\end{split}
\end{align*}
Arguing as in the proof of Theorem \ref{strict_concavity_condition} we obtain that for almost all $\theta \in S_{+}^{n-1}$ the distribution functions $z \mapsto \PP(f_{\theta}(S) \leq z|\Theta=\theta, S\in(a,s_{-}))$ and $z \mapsto \PP(f_{\theta}(S) \leq z|\Theta=\theta, S\in(s_{+},b))$ are continuous and convex on their support and therefore absolutely continuous with respect to Lebesgue measure. Hence, for any such $\theta$ we have $\PP(f_{\theta}(S) \in B|\Theta=\theta, S\in(a,s_{-})) = 0$ and $\PP(f_{\theta}(S) \in B|\Theta=\theta, S\in(s_{+},b))= 0$. Clearly, for any $\theta \in S_{+}^{n-1}$: $p_2(\theta) = (s_{+}(\theta) - s_{-}(\theta))/(b(\theta) - a(\theta))$. Further note that $\PP(f_{\theta}(S) \in B|\Theta=\theta, S\in[s_{-},s_{+}]) = \mathds{1}{\{m_P(\theta)^{1/(n-1)} \in B\}}$, with $m_P(\cdot)$ is as in (\ref{inner_section_function}) and $\theta \in S_{+}^{n-1}$. Combining these results we may therefore write:
\begin{align}
    \PP(f_{\Theta}(S) \in B) = \int_{S_{+}^{n-1}}\mathds{1}{\{m_P(\theta)^{\frac{1}{n-1}} \in B\}}\frac{s_{+}(\theta) - s_{-}(\theta)}{b(\theta) - a(\theta)}f_{P,\Theta}(\theta)\mathrm{d}\theta. \label{polytope_proof_conditioning}
\end{align}
In (\ref{polytope_proof_conditioning}) we effectively only integrate over $\theta$ such that $s_{+}(\theta) > s_{-}(\theta)$. By Lemma \ref{lemma_polytope_parallel_edges} this strict inequality only holds if for all $s \in [s_{-}(\theta),  s_{+}(\theta)]$ the same edges of $P$ are intersected by $T_{\theta,s}$ and these edges are all parallel. Define:
\[D = \left\{\theta \in S_{+}^{n-1}: s_{+}(\theta) > s_{-}(\theta)\right\}.\]
Hence, for any $\theta \in D$ and any $s \in [s_{-}(\theta),  s_{+}(\theta)]$, we have $m_P(\theta) = \vol_{n-1}(P \cap T_{\theta, s})$ and the plane $T_{\theta, s}$ only intersects edges of $P$ which are parallel. If $\sigma_{n-1}(D) = 0$, for example because $P$ does not have any parallel edges, then $\PP(f_{\Theta}(S) \in B) = 0$, hence $G_P$ is absolutely continuous. 

Let us now consider the case $\sigma_{n-1}(D) > 0$. We may write $D$ as a disjoint union $D = \cup_{i=1}^k D_i$ for some $k \in \NN$. Here $D_i$ is defined such that for all $\theta \in D_i$ all planes corresponding to $m_P(\theta)$ intersect the same parallel edges. Let $i \in \{1,\dots,k\}$ and let $e_1,\dots,e_m \subset P$ be the parallel edges of $P$ corresponding to $D_i$. Consider the plane $T_{\phi_i,0}$, with $\phi_i \in S_{+}^{n-1}$ such that this plane is orthogonal to the edges $e_1,\dots,e_m$. For any $L \subset \RR^n$ let $\Pi_i(L)$ denote the orthogonal projection of $L$ on the hyperplane $T_{\phi_i,0}$. Let $\theta \in D_i$, $s \in  [s_{-}(\theta),  s_{+}(\theta)]$ such that $m_P(\theta) = \vol_{n-1}(P \cap T_{\theta, s})$ and the plane $T_{\theta, s}$ intersects $e_1,\dots,e_m$. Note that for any $\theta \in D_i$ and $s \in  [s_{-}(\theta),  s_{+}(\theta)]$, $v_i :=\vol_{n-1}(\Pi_i(P\cap T_{\theta, s}))$ attains the same value. After all, for any such plane, $\Pi_i(P\cap T_{\theta, s})$ is a polytope in $T_{\phi_i, 0}$ and its vertices are given by the orthogonal projections of $e_1,\dots,e_m$ on $T_{\phi_i, 0}$. Moreover, it is well known that the volume of $P \cap T_{\theta, s}$ and the volume of its projection on $T_{\phi_i, 0}$ are related via:
\[v_i =\vol_{n-1}(\Pi_i(P\cap T_{\theta, s})) = |\langle \theta, \phi_i \rangle|\vol_{n-1}(P \cap T_{\theta, s}).\]
Hence,
\begin{equation}
    m_P(\theta) = \vol_{n-1}(P \cap T_{\theta, s}) = \frac{v_i}{|\langle \theta, \phi_i \rangle|}, \text{\quad} \theta \in D_i. \label{polytope_sectionfunction_local}
\end{equation}
If we were to draw $\Omega \sim \mathcal{U}(S^{n-1})$, then the Lebesgue density of the random variable $\langle \Omega,\phi_i\rangle$ is given by:
\[t \in [-1, 1] \mapsto \frac{\Gamma(\frac{n}{2})}{\sqrt{\pi}\Gamma(\frac{n-1}{2})}(1-t^2)^{\frac{n-3}{2}}.\]
This density does not depend on $\phi_i$ due to symmetry. Because the probability measure corresponding to the uniform distribution on the sphere is the normalized spherical measure, we obtain $\sigma_{n-1}(\{\theta \in S^{n-1}: \langle \theta, \phi_i \rangle \in B\}) = 0$. Via the change of variables formula it is easily verified that the random variable $(v_i/|\langle \Omega, \phi \rangle|)^{1/(n-1)}$ also has a Lebesgue density. Therefore: 
\begin{equation}
    \sigma_{n-1}\left(\left\{\theta \in S^{n-1}: \left(v_i /|\langle \theta, \phi_i \rangle|\right)^{\frac{1}{n-1}} \in B\right\}\right) = 0. \label{null_set_sphericalrv}
\end{equation}
Finally, from its definition it is evident that the density $f_{P,\Theta}$ is bounded, see (\ref{direction_marginal_density}). Let $M > 0$ be an upper bound of this density. Using this fact and (\ref{polytope_sectionfunction_local}) and (\ref{null_set_sphericalrv}), the claim follows:
\begin{align*}
    \PP(f_{\Theta}(S) \in B) &= \int_{S_{+}^{n-1}}\mathds{1}{\left\{m_P(\theta)^{\frac{1}{n-1}} \in B\right\}}\frac{s_{+}(\theta) - s_{-}(\theta)}{b(\theta) - a(\theta)}f_{P,\Theta}(\theta)\mathrm{d}\theta \\
    &\leq \int_D \mathds{1}{\left\{m_P(\theta)^{\frac{1}{n-1}} \in B\right\}}f_{P,\Theta}(\theta)\mathrm{d}\theta \\
    &\leq M \sum_{i=1}^k \int_{D_i} \mathds{1}{\left\{m_P(\theta)^{\frac{1}{n-1}} \in B\right\}}\mathrm{d}\theta \\
    &= M \sum_{i=1}^k \int_{D_i} \mathds{1}{\left\{\left(v_i /|\langle \theta, \phi_i \rangle|\right)^{\frac{1}{n-1}} \in B\right\}}\mathrm{d}\theta \\
    &\leq M \sum_{i=1}^k \sigma_{n-1}\left(\left\{\theta \in S^{n-1}: \left(v_i /|\langle \theta, \phi_i \rangle|\right)^{\frac{1}{n-1}} \in B\right\}\right)\\
    &=0.
\end{align*}
\end{proof}

\section{Density approximation}

In this section we consider convex bodies $K \in \mathcal{K}^n$ such that $G_K$ is absolutely continuous. For most convex bodies $K \in \mathcal{K}^n$ there is no known explicit expression for $G_K$ or its density $g_K$. In this section we focus on approximating the density $g_K$. This is achieved by obtaining a large sample from $G_K$ along with a kernel density estimator (KDE). We use the following rejection sampling scheme, proposed in \cite{Miles1978}, to sample from the distribution $G_K$:

\begin{enumerate}
    \item Enclose $K$ inside a sphere: choose $R > 0$ such that $K \subset \bar{B}(0, R)$.
    \item Choose an isotropic random direction $\Theta \sim \mathcal{U}(S^{n-1})$.
    \item Sample $S \sim \mathcal{U}(0,R)$.
    \item The plane $T_{\Theta,S}$ hits $\bar{B}(0,R)$, if the plane also hits $K$ we accept, and $Z = \vol_{n-1}(K\cap T_{\Theta, S})$ is a draw from $G_K$. If the plane does not hit $K$, we reject and go back to step 2. 
\end{enumerate}

In $\RR^2$, step 2 may be achieved by sampling $\Phi \sim \mathcal{U}(0,2\pi)$ followed by setting $\Theta = (\cos(\Phi), \sin(\Phi))$. In $\RR^3$ step 2 can be performed as follows. Sample $\Phi \sim \mathcal{U}(0,2\pi)$ and $X \sim \mathcal{U}(-1,1)$. Then, we may set $\Omega = \arccos\left(X\right)$ and $\Theta = (\sin\Omega\cos\Phi, \sin\Omega\sin\Phi,\cos\Omega)$. In order to keep the rejection rate in the sampling scheme low, $R$ should be as small as possible and $K$ should be positioned at the origin, meaning that $0 \in \interior K$.

Of course, for $K \in \mathcal{K}^n$ with $n=2$ we find: $G_K \equiv G_K^S$. Note that $G_K^S$ is initially convex on some initial interval (see Remark \ref{convexity_hk_remark}). As a result, its density $g_K^S$ is non-decreasing on this interval. This means that $g_K^S$ may even be constant initially. In addition, note that if $n=2$ and $K$ is a polygon then in \cite{Gates1987} it has been shown that $g_K^S$ is always constant on some initial interval. Because of the relation between $g_K^S$ and $g_K$, if $g_K^S$ is constant on an initial interval, then $g_K$ behaves like $z^{(2-n)/(n-1)}$ on this interval. Hence, when $n=3$ this means that $g_K$ behaves like $1/\sqrt{z}$ for $z$ close to zero. Clearly, this complicates the approximation of $g_K$ near zero. Therefore, we choose to approximate the density $g_K^S$ instead, and use (\ref{g_and_h_relation}) to obtain an approximation of $g_K$.

We will now introduce the Monte Carlo simulation scheme for approximating $g_K^S$. We choose a large $N\in \NN$ and sample $Z_1,\dots,Z_N \overset{\mathrm{iid}}{\sim} G_K$ using the sampling scheme given above. Setting $X_i = Z_i^{1/(n-1)}$, we obtain that $X_1,\dots,X_N \overset{\mathrm{iid}}{\sim} G_K^S$. The following KDE is for example studied in \cite{Schuster1985}, which we propose as an approximation for $g_K^S$:
\begin{equation}
\hat{g}_N^S(z) = \frac{1}{hN}\sum_{i=1}^N k\left(\frac{z-X_i}{h}\right) + k\left(\frac{z+X_i}{h} \right), \text{ \quad } z \geq 0, \label{reflection_method}
\end{equation}
with $h > 0$ the bandwidth parameter and $k$ a symmetric kernel. The KDE in (\ref{reflection_method}) is also known as the reflection method. A reason for using the reflection method over the classical (Parzen-Rosenblatt) KDE is that it ensures that no probability mass is assigned for $z < 0$. Recall that the classical KDE for the sample $X_1,\dots,X_N$ is given by:
\begin{equation}
f(z) = \frac{1}{hN}\sum_{i=1}^n k\left(\frac{z - X_i}{h}\right), \text{ \quad } z \in \RR. \label{classical_kde}
\end{equation}
Note that when computing the KDE in (\ref{classical_kde}) for the following 'sample' of size $2N$:
\[X_1,X_2,\dots,X_N,-X_1,-X_2,\dots,-X_N,\]
we find: $f(z) = \hat{g}_N^S(z)/2$. This fact may be used to choose the bandwidth $h$, since most of the literature is devoted to bandwidth selection for the classical KDE. In the data examples in the next section we choose for $k$ the Gaussian kernel and select the bandwidth with the popular Sheather-Jones method \cite{Sheather1991}. Whenever we want to approximate $g_K$ instead of $g_K^S$, we simply follow the procedure given above to compute $\hat{g}_N^S$. Then, using (\ref{g_and_h_relation}) we set:
\begin{equation}
\hat{g}_N(z) = \hat{g}_N^S\left(z^{\frac{1}{n-1}}\right)\frac{z^{\frac{2-n}{n-1}}}{n-1},
\end{equation}
which is an approximation of $g_K$. A drawback of the KDE in (\ref{reflection_method}) is that this density has (right)-derivative zero in $z=0$. As mentioned before, when $n=2$ and $K$ is a convex polygon this is not an issue since the density $g_K^S$ is then initially constant. In the data examples in the next section the approximations of $G_P^S$ of some convex polytopes $P$ in $\RR^3$ appear initially (close to) linear. This suggests that the choice of boundary correction is reasonable. Should one consider a polytope $P$ such that $G_P^S$ is far from being initially linear then other boundary correction methods may be more appropriate.

\begin{figure}[t!]
\begin{center}
\includegraphics[scale=0.84]{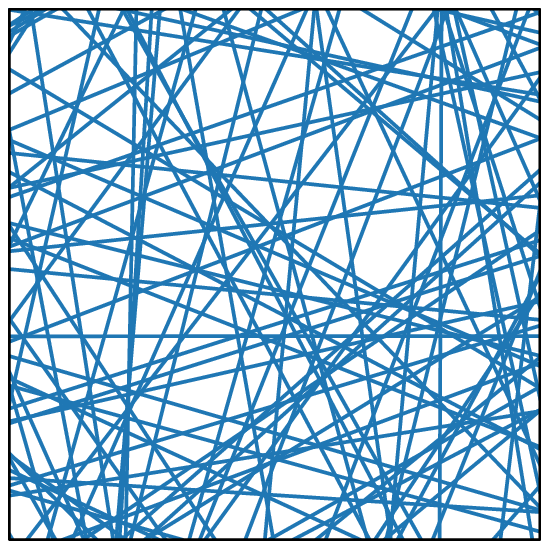}\includegraphics[scale=0.84]{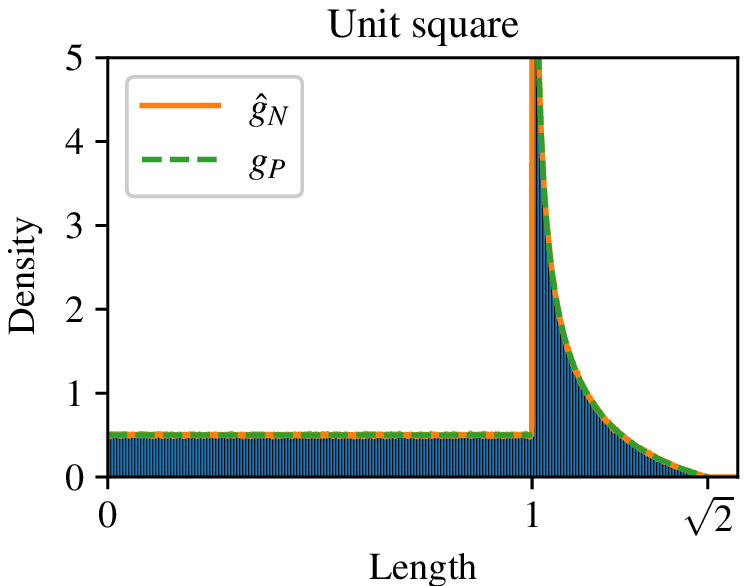}
\caption{Left: 100 IUR sections through $P$ the unit square in $\RR^2$. Right: comparison of the density $g_P$ to its approximation $\hat{g}_N$.}\label{unit_square_simulations}
\end{center}
\end{figure}

\begin{figure}[t!]
\begin{center}
\includegraphics[scale=0.84]{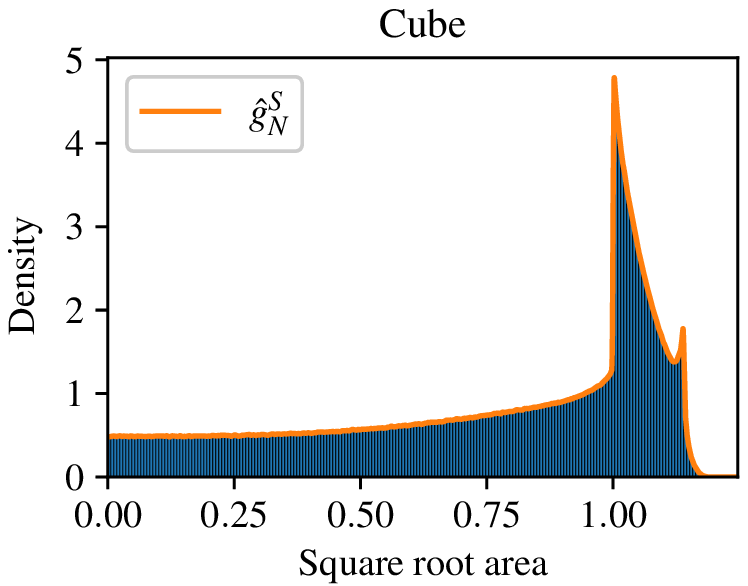}\includegraphics[scale=0.84]{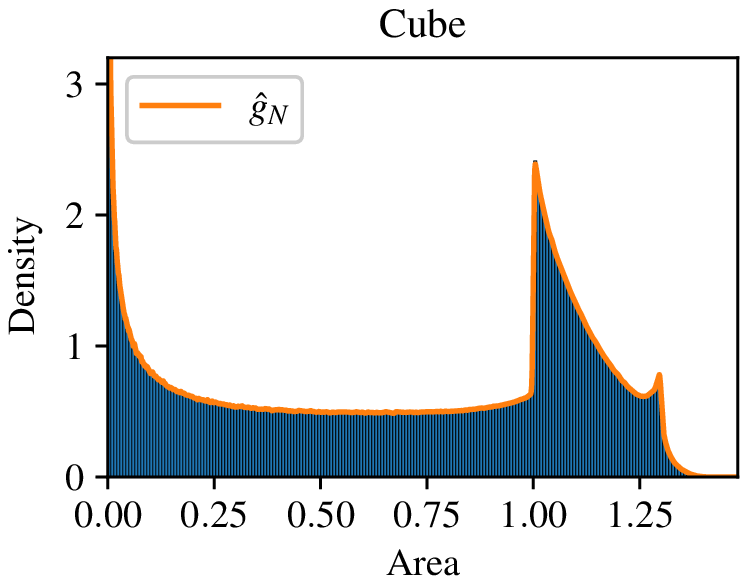}
\includegraphics[scale=0.84]{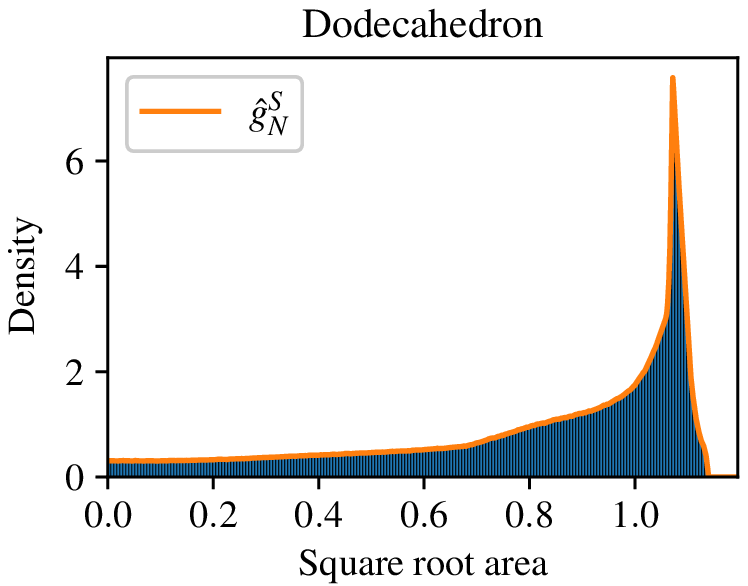}\includegraphics[scale=0.84]{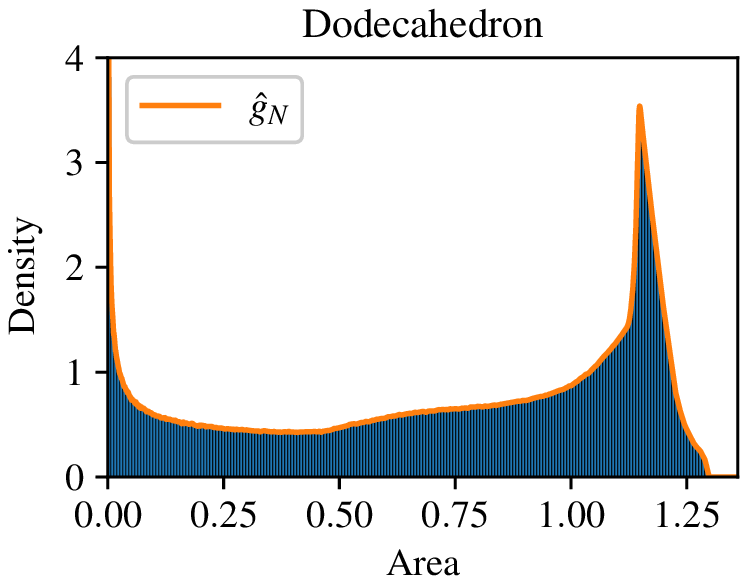}
\caption{Approximations of $g_P^S$ and $g_P$ for $P$ the unit cube (upper panel) and for $P$ the dodecahedron with volume 1 (lower panel).}\label{cube_simulations}
\end{center}
\end{figure}

\subsection{Simulations}
In this section we perform a few simulations to show that the Monte Carlo simulation scheme works well. For these simulations we focus on polytopes. Throughout this section, let $P\subset \RR^n$ be a full-dimensional polytope. We have implemented the sampling scheme for drawing samples from $G_P$ specifically for $n=2$ and $n=3$. The code used for the simulations may be found at \url{https://github.com/thomasvdj/pysizeunfolder}. The polytope can be entered into this program either by presenting a set of points, such that the polytope is given by the convex hull of these points, or by presenting a half-space representation of the polytope. 

In the literature, similar simulations have been performed, e.g. for the cube and the dodecahedron. Therefore we also consider these shapes, such that we have a point of comparison. Besides approximating $g_P$ and $g_P^S$ we also approximate $G_P^S$. The distribution function $G_P^S$ can be approximated arbitrarily closely by an empirical distribution function, given a large sample from $G_P^S$.  

For all simulations, we set $N=10^7$. For the first example, we choose the unit square in $\RR^2$. The density of its chord length distribution may be found in \cite{Coleman1969}, it is given by:
\[g_P(z) = \begin{cases}\frac{1}{2} & \text{if } 0 \leq z \leq 1 \\
    \frac{1}{z^2\sqrt{z^2 - 1}} - \frac{1}{2} & \text{if } 1 < z \leq \sqrt{2} 
    \end{cases}.\]
The approximation obtained via the proposed Monte Carlo scheme is shown in Figure \ref{unit_square_simulations}. Figure \ref{unit_square_simulations} also contains a visualization of 100 IUR sections through the unit square. As can be seen in Figure \ref{unit_square_simulations}, the approximation $\hat{g}_N$ is very close to the true probability density $g_P$. 

\begin{figure}[b!]
\begin{center}
\includegraphics[scale=0.84]{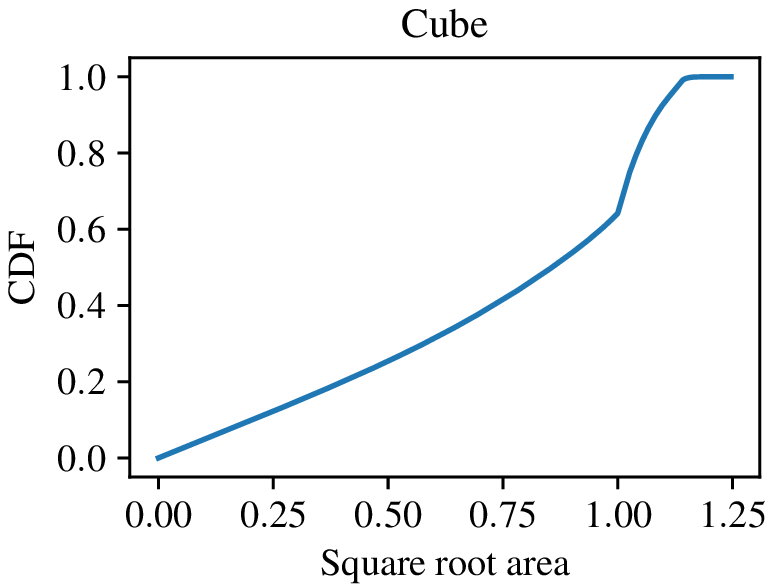}\includegraphics[scale=0.84]{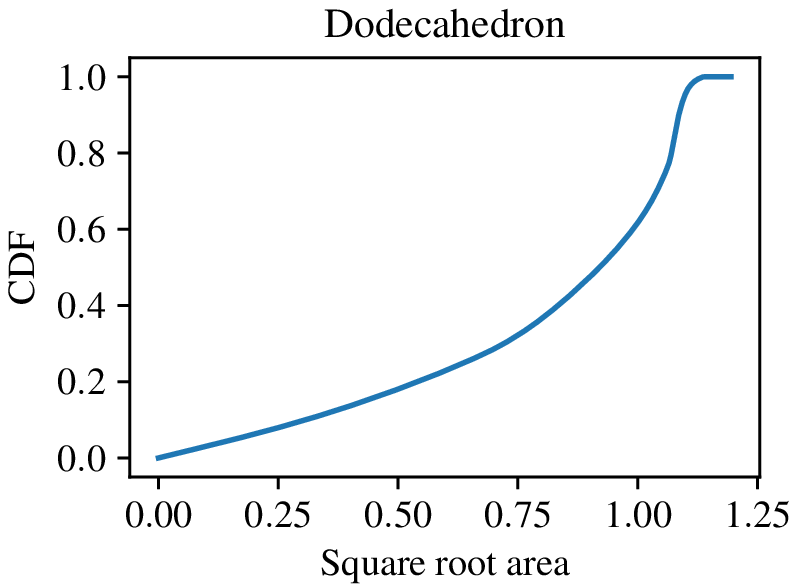}
\caption{Approximations of $G_P^S$ for $P$ the unit cube (Left) and for $P$ the dodecahedron with volume 1 (Right).}\label{cdf_estimates}
\end{center}
\end{figure}

We should stress that the proposed method is especially useful in the spatial setting $n=3$. Naturally, whenever the analytical expression for $g_P$ is available this is preferable. To the best of our knowledge, there are no known expressions for $g_P$ of any polytope $P$ in $\RR^3$. In the planar case ($n=2$) the density $g_P$ is known for various polygons, for example for rectangles \cite{Coleman1969}, and regular polygons \cite{Harutyunyan2009}. In Figure \ref{cube_simulations} the approximations of $g_P$ and $g_P^S$ are shown for the cube and the dodecahedron, both shapes scaled to have volume 1. 

Similar simulations were performed in \cite{Paul1981} for the cube and dodecahedron, qualitatively the curves visualized there are close to the approximations of $g_P$ shown in Figure \ref{cube_simulations}. For the cube, one can easily see that for any direction $\theta \in S_{+}^{n-1}$, there exists a section of area 1. By Remark \ref{convexity_hk_remark}, this means that $g_P^S$ is non-decreasing on $(0, 1)$, which can also be seen in Figure \ref{cube_simulations}. Approximations of $G_P^S$ for the cube and dodecahedron are shown in Figure \ref{cdf_estimates}. For these visualizations the same samples are used as in Figure \ref{cube_simulations}. As mentioned before, these approximations of $G_P^S$ appear initially (close to) linear, justifying the choice of boundary correction in the density approximation procedure.

\section{Concluding remarks}
In this paper we establish absolute continuity of the (transformed) section volume CDF for various classes of convex bodies. Absolute continuity of these distribution functions is essential for stereological estimation of particle size distributions using likelihood-based inference. Whether these distribution functions are absolutely continuous for all convex bodies remains an open problem. From a theoretical perspective we cover a large class of convex bodies with the strictly convex bodies. With polytopes we cover a class of convex bodies which is especially important in practical applications. Moreover, for polytopes we provide a Monte Carlo simulation scheme for approximating the density corresponding to its (transformed) section volume CDF. 

\appendix
\section{Appendix: additional proofs}\label{appendix_proofs}

\begin{proof}[Proof of Lemma \ref{basic_properties}]
Let $x \in \RR^n$, $\theta \in S_{+}^{n-1}$ and $s \in \RR$. It can be easily verified that the following holds:
\begin{equation}
    (K+x) \cap T_{\theta, s} = \left(K \cap T_{\theta, s - \langle x,\theta\rangle}\right) + x. \label{translation_equality}
\end{equation}
Meaning that the intersection of a translated $K$ with a plane is the same as the intersection of $K$ with a translated plane and then translating the result. It follows that:
\begin{align}
G_{K+x}(z) &= \int_{S_{+}^{n-1}}\int_{\RR} \frac{\mathds{1}{\{\vol_{n-1}((K+x)\cap T_{\theta, s}) \leq z\}}\mathds{1}{\{(K+x) \cap T_{\theta, s} \neq \emptyset\}}}{\mu([K+x])}\mathrm{d}s\mathrm{d}\theta \nonumber\\
\overset{(\ref{translation_equality})}&{=} \int_{S_{+}^{n-1}}\int_{\RR} \frac{\mathds{1}{\{\vol_{n-1}((K \cap T_{\theta, s - \langle x,\theta\rangle}) + x) \leq z\}}\mathds{1}{\{K \cap T_{\theta, s - \langle x,\theta\rangle } \neq \emptyset\}}}{\mu([K+x])}\mathrm{d}s\mathrm{d}\theta \nonumber\\
&= \int_{S_{+}^{n-1}}\int_{\RR} \frac{\mathds{1}{\{\vol_{n-1}(K \cap T_{\theta, s - \langle x,\theta\rangle}) \leq z\}}\mathds{1}{\{K \cap T_{\theta, s - \langle x,\theta\rangle} \neq \emptyset\}}}{\mu([K+x])}\mathrm{d}s\mathrm{d}\theta \label{lebesgue_translation_invariance}\\
&= \int_{S_{+}^{n-1}}\int_{\RR} \frac{\mathds{1}{\{\vol_{n-1}(K \cap T_{\theta, t}) \leq z\}}\mathds{1}{\{K \cap T_{\theta, t} \neq \emptyset\}}}{\mu([K+x])}\mathrm{d}t\mathrm{d}\theta \nonumber
\end{align}
In (\ref{lebesgue_translation_invariance}) we use the translation invariance of the Lebesgue measure. The final step is obtained by substituting $t = s - \langle x,\theta\rangle$. Via the same substitution it can be shown that $\mu([K+x]) = \mu([K])$. As a result we obtain: $G_{K+x}(z) = G_K(z)$. Moving on to the rotation invariance, let $M \in \SO(n)$, then the following can be shown:
\begin{equation}
    MK \cap T_{\theta, s} = M(K \cap T_{M^T \theta,s}). \label{rotating_equality}
\end{equation}
Using this, we find:
\begin{align}
G_{MK}(z) &= \frac{1}{2}\int_{S^{n-1}}\int_{\RR} \frac{\mathds{1}{\{\vol_{n-1}(MK\cap T_{\theta, s}) \leq z\}}\mathds{1}{\{MK \cap T_{\theta, s} \neq \emptyset\}}}{\mu([MK])}\mathrm{d}s\mathrm{d}\theta \label{hemi_sphere_substitution} \\
\overset{(\ref{rotating_equality})}&{=} \frac{1}{2}\int_{S^{n-1}}\int_{\RR} \frac{\mathds{1}{\{\vol_{n-1}(M(K \cap T_{M^T \theta,s})) \leq z\}}\mathds{1}{\{M(K \cap T_{M^T \theta,s}) \neq \emptyset\}}}{\mu([MK])}\mathrm{d}s\mathrm{d}\theta \nonumber \\
&= \frac{1}{2}\int_{S^{n-1}}\int_{\RR} \frac{\mathds{1}{\{\vol_{n-1}(K \cap T_{M^T \theta,s}) \leq z\}}\mathds{1}{\{K \cap T_{M^T \theta,s} \neq \emptyset\}}}{\mu([MK])}\mathrm{d}s\mathrm{d}\theta \label{lebesgue_rotation_invariance}  \\
&= \int_{S_{+}^{n-1}}\int_{\RR} \frac{\mathds{1}{\{\vol_{n-1}(K \cap T_{u,s}) \leq z\}}\mathds{1}{\{K \cap T_{u,s} \neq \emptyset\}}}{\mu([MK])}\mathrm{d}s\mathrm{d}u. \label{rotation_invariance_substitution}
\end{align}
In (\ref{hemi_sphere_substitution}) we use the fact that the inner integral does not change if we replace $\theta$ with $-\theta$, therefore we may integrate over $S^{n-1}$ instead and divide the result by two. In (\ref{lebesgue_rotation_invariance}) we use the rotation invariance of the Lebesgue measure. In (\ref{rotation_invariance_substitution}) the substitution: $u = M^T \theta$ is applied. Because $M$ is an orthogonal matrix of determinant one, the Jacobian corresponding to the transformation has determinant one. Since $M^T S^{n-1} = S^{n-1}$, the transformation does not affect the integration region. Via the same substitutions it can be shown that $\mu([MK]) = \mu([K])$ such that indeed $G_{MK}(z) = G_K(z)$. Next, we consider scaling of convex bodies. Let $\lambda > 0$, we remark that the following holds:
\begin{equation}
    \lambda K \cap T_{\theta,s} = \lambda\left(K \cap T_{\theta,\frac{s}{\lambda}}\right).
\end{equation}
Using this and the fact that $\vol_{n}(\lambda K) = \lambda^{n}\vol_{n}(K)$ for $K \in \mathcal{K}^n$, it is once again a matter of applying a substitution to obtain:
\begin{align*}
        G_{\lambda K}(z) = \lambda\int_{S_{+}^{n-1}}\int_{\RR} \frac{\mathds{1}{\{\vol_{n-1}(K \cap T_{\theta, t}) \leq \frac{z}{\lambda^{n-1}}\}}\mathds{1}{\{K \cap T_{\theta, t} \neq \emptyset\}}}{\mu([\lambda K])}\mathrm{d}t\mathrm{d}\theta.
\end{align*}
And similarly, via substitution we find: $\mu([\lambda K]) = \lambda \mu([K])$ such that indeed: $G_{\lambda K}(z) = G_K(z/\lambda^{n-1})$. We now consider the final statement of the lemma. Let $T$ be an IUR plane hitting $L$. By proposition 1 in \cite{Davy1977}, the probability that $T$ hits $K$ is given by $\bar{b}(K)/\bar{b}(L)$. Moreover, given that $T$ hits $K$ it is an IUR plane hitting $K$. It follows that:
    \begin{align*}
    G_L(z) &= \PP(\vol_{n-1}(L \cap T) \leq z) \\
        &= \PP(\vol_{n-1}(L \cap T) \leq z| K\cap T \neq \emptyset)\frac{\bar{b}(K)}{\bar{b}(L)} + \PP(\vol_{n-1}(L \cap T) \leq z| K\cap T = \emptyset)\left(1-\frac{\bar{b}(K)}{\bar{b}(L)}\right) \\
        &\leq \PP(\vol_{n-1}(K \cap T) \leq z| K\cap T \neq \emptyset)\frac{\bar{b}(K)}{\bar{b}(L)} + \left(1-\frac{\bar{b}(K)}{\bar{b}(L)}\right)\\
        &= G_K(z)\frac{\bar{b}(K)}{\bar{b}(L)} + \left(1 - \frac{\bar{b}(K)}{\bar{b}(L)} \right).
    \end{align*}
\end{proof}

\begin{proof}[Proof of Lemma \ref{strict_inclusion_lemma}]
Let $x \in \partial K$. Since $x \in \interior L$ there exists a $R > 0$ such that $B(x,R)\subset \interior L$. Because $x \in \partial K$ we know that $B(x,R) \cap (\RR^n \setminus K) \neq \emptyset$. Choose $y \in B(x,R) \cap (\RR^n \setminus K)$. Note that $y \in \interior L$ and $\interior L$ is open. Choose $r_1 > 0$ such that $B(y, r_1) \subset \interior L$. Because $K$ is closed, $\RR^n \setminus K$ is open. Choose $r_2 > 0$ such that $B(y,r_2) \subset \RR^n \setminus K$. Let $r = \min\{r_1, r_2\}$, then $B(y,r) \subset (\interior L) \setminus K$ and this ball has a strictly positive volume. Hence, we find:
\begin{align*}
    \vol_n (L) = \vol_n(L \setminus K) + \vol_n(K) \geq \vol_n(B(y,r)) + \vol_n(K) > \vol_n(K).
\end{align*}
\end{proof}

\section*{Acknowledgements}
This research was carried out under project number T17019s in the framework of the Research Program of the Materials innovation institute (M2i) (www.m2i.nl) supported by the Dutch government. We thank Kees Bos, Jilt Sietsma and Karo Sedighiani for fruitful discussions.

\bibliographystyle{ieeetr}
\bibliography{export}
\end{document}